\newcommand{\formalizationref}[2]{\href{\agdahtmlurl#2}{\texttt{#1}}}
\def\@copyrightpermission{}
\author{Christian Sattler}
\affiliation{%
	\institution{Chalmers University of Technology \\ and University of Gothenburg}
	\city{Gothenburg}
	\country{Sweden}
	}
\email{sattler@chalmers.se}
\author{David Wärn}
\affiliation{%
	\institution{Chalmers University of Technology \\ and University of Gothenburg}
	\city{Gothenburg}
	\country{Sweden}
	}
\email{warnd@chalmers.se}
\title{Natural numbers from integers}
\begin{document}

\begin{abstract}
In homotopy type theory, a natural number type is freely generated by an element and an endomorphism.
Similarly, an integer type is freely generated by an element and an automorphism.
Using only dependent sums, identity types, extensional dependent products, and a type of two elements with large elimination, we construct a natural number type from an integer type.
As a corollary, homotopy type theory with only $\Sigma$, $\mathsf{Id}$, $\Pi$, and finite colimits with descent (and no universes) admits a natural number type.
This improves and simplifies a result by Rose.
\end{abstract}

\maketitle

%\end{document}

\section{Introduction}

In standard set theory and other impredicative background theories, the set of natural numbers is easily constructed from some source of infinity.
As soon as we have a set $X$ with a ``zero'' element $z : X$ and an injective ``successor'' function $s : X \to X$ whose image does not contain $z$, we may carve out the natural numbers by taking the intersection of all subsets of $X$ closed under $z$ and $s$.
However, this reasoning is essentially \emph{impredicative}: for this definition to make sense (and yield the expected induction principle), we need the collection of all subsets of $X$ to form a set, the power set of $X$.

In predicative settings such as Martin-Löf dependent type theory, this reasoning is ill-founded.
Inductive type formers such as natural numbers cannot be reduced to non-inductive constructions: without impredicativity as a free source of induction principles, we are stuck.
Therefore, the natural numbers (and other inductive types) are usually assumed axiomatically.

Homotopy type theory is a version of Martin-Löf type theory with function extensionality, univalence for any assumed universes, and generalizations of inductive types called \emph{higher inductive types}.
An example of a non-recursive higher inductive type is the \emph{circle} $S^1$, freely generated by an element $b$ and an identification of $b$ with itself.
As proved in~\cite{kraus-von-raumer:path-hits}, the loop space of the circle (\ie, the type of self-identifications of its base point) has the universal properties of the \emph{integers}: it is freely generated by an element (given by the trivial self-identification) and an automorphism (given by composition with the generating loop).
Although its loop space has infinitely many elements, the type itself is generated using purely finite means: the higher inductive type of the circle does not have any recursive constructors.
Categorically, it arises from only finite limits and colimits.

The integers are certainly a source of infinity.
And in an impredicative setting, it would be reasonable to expect that a natural number type can be carved out from it in some way.
But in a predicative setting such as homotopy type theory, the problem of constructing the natural numbers from the integers is perplexingly challenging.
For example, there does not seem to be an easy way of even defining the subtype of non-negative integers: all the standard approaches use an inductively defined predicate.
The only possible form of induction, integer induction, is \emph{reversible}, in the sense that the induction step must be an equivalence rather than an arbitrary implication.
This is a severe restriction.

The question to settle is this: in a predicative theory such as homotopy type theory, can we construct the natural numbers from the integers?
To our knowledge, this question was first raised in a discussion between Egbert Rijke and Mike Shulman on the nForum~\cite{egbert:nat-from-int}.
(Instead of the integers, we may also assume non-recursive higher inductive types such as the circle and assume a univalent universe.)

To the surprise of many experts, Robert Rose settled this question positively in his PhD thesis~\cite{rose:naturals}.
His construction is quite complicated, reflected in the length of his thesis.
There were also some caveats: he needs two nested univalent universes (although the outer one may be replaced by large elimination).

Our contribution in this paper is to provide a new construction of the natural numbers from the integers that we believe to be much simpler.
Furthermore, our construction is more general: we do not need any univalent universes whatsoever.
Instead, we just rely on effectivity of finite coproducts (that the coprojections are disjoint embeddings; equivalently, that we have a two-element type that satisfies descent, a version of large elimination that computes up to equivalence).
In fact, we present two construction: a direct one in \cref{direct} and an indirect one in \cref{indirect}.

The key novel idea of our work is the following. We would like to say that a
natural number is a non-negative integer. So we would like to be able to define
the map $\sigma : \Z \to \two$ from integers to Booleans that tells whether an
integer is negative or non-negative. A priori it is unclear how to do this: our
only tool is integer induction, but $\sigma$ is not induced by any automorphism
of $\two$. Consider instead how to define $\sigma$ \emph{supposing} we have the
naturals $\N$ available: then we define an automorphism of the coproduct $\N
\sqcup \one \sqcup \N$, get an induced map $\sigma : \Z \to \N \sqcup \one
\sqcup \N$ (which happens to be an equivalence), and postcompose with the
natural map $\N \sqcup (\one \sqcup \N) \to \two$. In our setting, we cannot use $\N$
but we can simply replace $\N \sqcup \one \sqcup \N$ with another coproduct
$A \sqcup B \sqcup A$. This has an automorphism with the desired properties
whenever we have an equivalence $A \simeq B \sqcup A$.
In particular, one can define an equivalence $\Z \simeq \Z \sqcup \Z$,
decomposing $\Z$ into an even part and an odd part, directly
using integer induction.

The categorical analogue of our result is the following.
Suppose we have a locally cartesian closed (higher) category with finite coproducts that satisfy descent.
Then given an integer object, we have a natural number object.
In particular, any locally cartesian closed (higher) category with finite colimits that satisfy descent has a natural number object.
In the absence of an internal language result of our type theory in such higher categories (related conjectures are made in \cite{homotopy-theory-of-type-theories} with progress in \cite{internal-language-lex,infinity-type-theories}), we believe that our constructions are sufficiently abstract that they can be directly replayed in any concrete model for higher categories.%
\footnote{
We note that a similar-looking result is claimed in Rasekh~\cite{rasekh:naturals}.
However, this is in a setting with impredicativity, which as explained above simplifies the problem greatly.
Furthermore, the proof that the integers $\Omega S^1$ are 0-truncated is broken; the mistake is hidden in the proof of Lemma~1.2.1: $\mathsf{Eq}(g_1, g_2)$ is not generally a subspace of $\mathsf{Map}_{/X}(g_1, g_2)$.

Predicatively, it does not seem possible to directly show that the integers are 0-truncated.
Of course, once we have constructed the natural numbers, they are 0-truncated as usual and so are the integers via $\Z \simeq \N \sqcup \one \sqcup \N$.
See \cref{integers-set} for another perspective.
}

\section{Setting: a restricted type theory}

We work in dependent type theory with only a limited set of type formers as set out below.
We use $\equiv$ to denote judgmental equality and $\defeq$ to denote judgmental definitions.
We generally follow the homotopy type theory book~\cite{hottbook} for notational conventions.

We have the identity type $x =_A y$ for $x, y : A$ in the sense of Martin-Löf (no equality reflection or axiom K).
We have the unit type $\one$, dependent sum types $\sm{x:A} B(x)$, and dependent product types $\prd{x:A} B(x)$, all satisfying judgmental $\beta$- and $\eta$-laws.
Dependent products are extensional, meaning function extensionality holds (phrased using identity types).

We assume a \emph{two-element type} $\two$, freely generated by elements $0, 1 : \two$.
Its universal property says that for any type $C$ with elements $d_0, d_1 : C$, the type of maps $f : \two \to C$ with identifications $f(0) =_C d_0$ and $f(1) =_C d_1$ is contractible.
From this, we can derive a dependent elimination principle whose computation rule holds up to identity type.

\subsection{Descent}

We assume the two-element type satisfies \emph{descent}.
This encapsulates elimination into a univalent universe, allowing us to omit universes from our type theory.
The principle is as follows.
Given types $A_0$ and $A_1$, we have a family $C(x)$ over $x : \two$ together with equivalences $C(0) \simeq A_0$ and $C(1) \simeq A_1$.
Here, the notion of equivalence is defined as usual in homotopy type theory (using dependent sums and products).
This is a homotopically weakened version of large elimination.

\subsection{Basic consequences}

\paragraph{Elimination principle for two-element type}

From the universal property of the two-element type $\two$, we can derive a dependent elimination principle whose reduction rule holds up to identity type.
More precisely, given $C(x)$ for $x : \two$ with $d_0 : C(0)$ and $d_1 : C(1)$, we obtain $e(x) : C(x)$ for $x : \two$ together with identifications $e(0) =_C d_0$ and $e(1) =_C d_1$.
Moreover, the type of such $e$ is contractible (this uses function extensionality).

\paragraph{Empty type}

We may construct the empty type as
\[
\zero \defeq 0 =_\two 1
\rlap{.}\]
Its recursion principle follows from descent for the two-element type: given any type $A$, we have a family $C(x)$ over $x : \two$ with $C(0) \simeq \one$ and $C(1) \simeq A$.
From $y : 0 =_\two 1$, we then get $C(0) \simeq C(1)$, so $A \simeq \one$, meaning that $A$ is contractible.
From this, we obtain the elimination principle for $\zero$ as usual.

\paragraph{$\two$ is a set}

By descent, we have a family $C(x)$ over $x : \two$ with $C(0) \simeq \one$ and $C(1) \simeq \zero$.
The sum over $x : \two$ of $C(x)$ has an element at $x \equiv 0$.
By induction over $\two$ and $\zero$, every element in the sum is equal to it.
This shows that $\sm{x:\two} C(x)$ is contractible.
Since $C(0)$ is contractible, it follows that $0 =_\two 0$ is contractible.
A dual argument shows that $1 =_\two 1$ is contractible, making $\two$ a set (\ie, 0-truncated) by $\two$-induction.

\paragraph{Binary coproducts}

From descent for the two-element type, we can construct the \emph{binary coproduct type} $A_0 \sqcup A_1$ of types $A_0$ and $A_1$.
First, we obtain $C(x)$ over $x : \two$ with $C(0) \simeq A_0$ and $C(1) \simeq A_1$ from descent.
We then define
\[
A_0 \sqcup A_1 \defeq \sm{x:\two} C(x).
\]
The coprojections $\tau_i : A_i \to A_0 \sqcup A_1$ for $i \in \braces{0, 1}$ are defined by passing backward along the equivalence $C(i) \simeq A_i$.
The universal property of binary coproducts reduces, using dependent products, to the universal property of the two-element type.
Given functions $f_i : A_i \to D$ for $i \in \braces{0, 1}$, we write $[f_0, f_1] : A_0 \sqcup A_1 \to D$ for the map given by the universal property.
Equivalently, the dependent elimination principle for coproducts is justified by that principle for the two-element type.
Its reduction rules again hold up to identity type.
We use the notation $[-, -]$ also in this dependent case.

\paragraph{No confusion for binary coproducts}

The constructors for binary coproducts are disjoint embeddings.
Disjointness follows from the tautology $\neg (0 =_\two 1)$.
The embedding property reduces to contractibility of $0 =_\two 0$ and $1 =_\two 1$.
Alternatively, we may derive no-confusion from descent below.

\paragraph{Descent for binary coproducts}

Descent for the two-element type implies descent for binary coproducts.
This is the following principle, again encapsulating a homotopically weakened version of large elimination into a univalent universe.
Given families $B_i(a_i)$ over $a_i : A_i$ for $i \in \braces{0, 1}$, we have a family $D(y)$ over $y : A_0 \sqcup A_1$ with equivalences $D(\tau_i(a_i)) \simeq B_i(a_i)$.

The justification is a special case of stability of colimits with descent under pullback.
By descent for the two-element type, we have a family $D'(x)$ over $x : \two$ with for $ \in \braces{0, 1}$:
\[
v'_i : D'(i) \simeq \sm{a_i : A_i} B_i(a_i)
\rlap{.}\]
Recall $C(x)$ over $x : \two$ with $u_i : C(i) \simeq A_i$ for $i \in \braces{0, 1}$.
Elimination provides $f_x : D'(x) \to C(x)$ such that $f_i$ corresponds to the projection to $A_i$ under the equivalences $v'_i$ and $u'_i$ for $x : \two$.
We define
\[
D(x, y) \defeq f_x^{-1}(y)
\]
and note that
\[
D(i, a_i) \equiv f_i^{-1}(a_i) \simeq \fst^{-1}(a_i) \simeq B_i(a_i)
\rlap{.}\]
Alternatively, we can derive descent from no-confusion.

\paragraph{Finite coproducts with no confusion and descent}

We derive finite coproducts from binary coproducts and the empty type.
This extends to properties such as no-confusion or descent (note that descent for nullary coproducts is vacuous).
For a coproduct of (external) arity $k \in \N$, we denote the coprojections by $\tau_i$ where $0 \leq i < k$.

\subsection{Formalization}

The first approach in this paper has been formalized in Agda~\cite{agda}.
%This formalization is available as an \href{\agdaurl}{electronic supplement}.%
\footnote{At the point of writing, an HTML rendering of the formalization is available at \href{https://dwarn.se/naturals/index.html}{https://dwarn.se/naturals/index.html}.
This version may be updated in the future.}
Where appropriate, lemmas and theorem in our paper come with a reference to the corresponding part of the formalization.
Agda has many features beyond the setting of this paper, including universes
and general inductive types. The formalization avoids such features
and does not use any library. The only inductive types declared in the
formalization are the ones in our setting. It does not seem feasible to avoid
mentioning the first universe $\mathsf{Set}$ entirely when formalizing Agda,
since it is used for example for type polymorphism and to express
the judgement ``$A$ is a type'', but we restrict our uses of
$\mathsf{Set}$ to such places where it could in principle be
eliminated.

\paragraph{Isomorphisms and equivalences}

A fundamental result in any library of homotopy type theory is the fact that
every map with a two-sided inverse has contractible fibers, and we need this
result also in the present work.
The well-known proof of this result, reproduced for example in Section 10.4 of~\cite{egbert:intro},
is not entirely trivial, requiring some higher path algebra.
We present a new proof which we argue is simpler and easier to remember.
A similar argument appears in lectures notes by Mart\'in Escard\'o~\cite{escardo:intro}.

We start from two types $A$ and $B$ with maps $f : A \to B$ and $g : B \to A$ such that
$g(f(a)) = a$ for all $a : A$ and $f(g(b)) = b$ for all $b : B$.
We claim that the action on paths \[\ap_f : (a =_A a') \to (f(a) =_B f(a'))\] is
an isomorphism for all $a, a' : A$, i.e.\ it also has a left inverse and a right inverse.
We have a map \[\ap_g : (f(a) =_B f(a')) \to (g(f(a)) =_A g(f(a'))\]
and an isomorphism \[e : (g(f(a)) =_A g(f(a')) \to (a =_A a')\] since $g(f(a)) = a$ and
$g(f(a')) = a'$. We have \[e(\ap_g(\ap_f(p))) =_{a =_A a'} p\] for $p : a =_A a'$ by
path induction on $p$ and using that $e$ sends reflexivity to reflexivity.
Thus $\ap_f$ has a left inverse. By the same argument, swapping the roles of $f$ and $g$,
$\ap_g$ has a left inverse. Since $e$ is an isomorphism, $\ap_g$ has
a left inverse and a right inverse, so it is an isomorphism. Thus $\ap_f$ is also an isomorphism.
From this, we prove that $f$ has contractible fibers as follows.
Since $b = f(g(b))$ for $b : B$, it suffices to consider the fiber of $f$ over $f(a)$ for $a : A$,
that is $\sm{a' : A} f(a) =_B f(a')$. Since $\ap_f$ is an isomorphism, this is isomorphic to
$\sm{a' : A} a =_A a'$, a singleton, hence contractible.

\section{Natural numbers and integers}

A \emph{natural number algebra} is a type $A$ together with an element $z : A$ and an endofunction $s : A \to A$.
Note that this is an external notion (lacking universes, we may not quantify internally over types $A$ in our type theory).
We generally refer to a natural number algebra just by its underlying type.
We use $z$ and $s$ generically to refer to the structure components of a natural number algebra $A$.
The \emph{structure map} of $A$ is the map $\one \sqcup A \to A$ induced by $z$ and $s$.

Given natural number algebras $A$ and $B$, the type of \emph{algebra morphisms} from $A$ to $B$ consists of a function $f : A \to B$ with $f(z) =_B z$ and $f(s(a)) =_B s(f(a))$.
We call a natural number algebra $A$ \emph{initial} if this type is contractible for every $B$.
We then say $A$ is a \emph{natural number type}.
Every natural number algebra $A$ has an identity algebra morphism $\id_A$.
Algebra morphisms $f : A \to B$ and $g : B \to C$ admit a composition $g \cc f : A \to C$.
These operations satisfy unit and associativity laws (phrased using identity types).
We will not need any higher coherence conditions.

We also have displayed analogues of these notions.
Given a natural number $A$, a \emph{natural number algebra $B$ displayed over $A$} is a family $B(a)$ over $a : A$ with $z : B(z)$ and $s_a(b) : B(s(a))$ for $a : A$ and $b : B(a)$.
The type of \emph{sections} of $B$ consists of $b(a) : B(a)$ for $a : A$ together with $b(z) =_{B(z)} z$ and $b(s(a)) =_{B(s(a))} s(b(a))$ for $a : A$.
We say that $A$ has \emph{elimination} if every natural number algebra displayed over it has a section.
As is standard, one proves (externally):
% cite homotopy W-types

\begin{lemma}%
\label{nat-up-vs-elim}
A natural number algebra is initial exactly if it has elimination.
\qed%
\end{lemma}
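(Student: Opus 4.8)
The plan is to prove the two implications separately, reducing everything in each case to the displayed-algebra data already available.

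\emph{Elimination implies initiality.} Suppose $A$ has elimination and let $B$ be any natural number algebra. To build an algebra morphism $A \to B$, I apply elimination to the constant displayed algebra $a \mapsto B$ over $A$, whose displayed point and successor are just $z : B$ and $s : B \to B$; a section of this displayed algebra unfolds to exactly the data of an algebra morphism $A \to B$. For uniqueness, let $f_1, f_2 : A \to B$ be morphisms with coherences $p_i : f_i(z) =_B z$ and $q_i(a) : f_i(s(a)) =_B s(f_i(a))$. I form the displayed algebra $D$ over $A$ with $D(a) \defeq (f_1(a) =_B f_2(a))$, displayed point $p_1 \cdot p_2^{-1} : f_1(z) =_B f_2(z)$, and displayed successor sending $r : f_1(a) =_B f_2(a)$ to $q_1(a) \cdot \ap_s(r) \cdot q_2(a)^{-1} : f_1(s(a)) =_B f_2(s(a))$. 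A section of $D$ (again using elimination for $A$) supplies a homotopy $r(a) : f_1(a) =_B f_2(a)$ whose coherence conditions say precisely that $r(z)$ agrees with $p_1, p_2$ and that $r(s(a))$ agrees with $q_1(a), q_2(a)$; by function extensionality together with the standard characterization of identity types in $\Sigma$-types and path types, this reassembles into an identification of $f_1$ and $f_2$ as algebra morphisms. Hence the type of morphisms $A \to B$ is inhabited and a mere proposition, so contractible, and $A$ is initial.

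\emph{Initiality implies elimination.} Suppose $A$ is initial and let $B$ be a displayed algebra over $A$. Form the total algebra $\widehat B$ with underlying type $\sm{a : A} B(a)$, point $(z, z)$, and successor $(a, b) \mapsto (s(a), s_a(b))$; the first projection $\fst : \widehat B \to A$ is then an algebra morphism (strictly, with reflexivity coherences). By initiality there is an algebra morphism $g : A \to \widehat B$, and since the type of morphisms $A \to A$ is contractible we obtain an identification $h$ of $\fst \cc g$ with $\id_A$. Writing $g(a) = (g_1(a), g_2(a))$ with $g_2(a) : B(g_1(a))$, the underlying homotopy of $h$ gives $h_0(a) : g_1(a) =_A a$, and I set $\mathrm{sec}(a) \defeq \mathsf{transport}^B(h_0(a), g_2(a)) : B(a)$. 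The coherences of $g$ as a morphism into $\widehat B$ together with those of $h$ witness that $\mathrm{sec}(z) =_{B(z)} z$ and $\mathrm{sec}(s(a)) =_{B(s(a))} s_a(\mathrm{sec}(a))$, so $\mathrm{sec}$ is a section of $B$ and $A$ has elimination.

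\emph{Main obstacle.} No higher coherences are needed, so the only real work is identity-type bookkeeping: in the first half, verifying that the section of $D$ transports the pair $(p_1, q_1)$ to $(p_2, q_2)$ along the induced homotopy of underlying functions (a computation with transport in a $\Sigma$ of path types), and symmetrically, in the second half, checking that the transported second components $\mathrm{sec}(a)$ respect the displayed point and successor. Both amount to standard path algebra, which is why the lemma can be asserted without an explicit proof.
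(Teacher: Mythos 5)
Your proof is correct and is exactly the standard argument that the paper leaves implicit (it states the lemma with only the remark ``as is standard, one proves\dots''): elimination gives existence via the constant displayed algebra and uniqueness via the identity-type-valued displayed algebra, while initiality gives elimination via the total algebra and the contractibility of the type of endomorphisms of $A$. The identity-type bookkeeping you defer to is indeed routine and uses no structure beyond what the paper's setting provides.
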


The \emph{total algebra} of $B$ is obtained by taking the dependent sum and has an evident projection morphism to $A$.
This is what we mean when we regard $B$ as an algebra.
Conversely, an algebra $T$ with a morphism $f : T \to A$ can be converted to displayed form over $B$ by taking fibers of $f$.

\medskip

The following notion features in our second approach.
A variation with a family of nullary constructors appears in our first approach.

\begin{definition}%
\label{stable}
A natural number algebra $A$ is \emph{stable} if its structure map $[z, s] : \one \sqcup A \to A$ is an equivalence.
\end{definition}

This means that $z : \one \to A$ and $s : A \to A$ form a colimiting cocone, \ie, that $A$ is \emph{non-recursively} freely generated by $z$ and $s$.
Lambek's lemma states that every initial natural number algebra is stable.

We call a natural number algebra $A$ an \emph{integer algebra} if its endofunction $s$ is an equivalence.
Note that this condition is a proposition.
This justifies defining morphisms of integer algebras as morphisms of the underlying natural number algebras.
An integer algebra $A$ is \emph{initial} if the type of integer algebra morphisms from $A$ to $B$ is contractible for any integer algebra $B$.
We then say that $A$ is an \emph{integer type} (more verbosely, a \emph{type of integers}).

We have a notion of displayed integer algebra analogous to the case of natural number algebras.
The type of sections of a displayed integer algebra is defined as the type of sections of the underlying displayed natural number algebra.
Analogous to the case of natural numbers, we have:

\begin{lemma}%
\label{integer-up-vs-elim}
An integer algebra is initial exactly if it has elimination.
\qed%
\end{lemma}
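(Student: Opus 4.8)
The plan is to adapt the standard proof of the natural number case, \cref{nat-up-vs-elim}, checking at each step that the auxiliary algebras we build are genuine integer algebras (respectively displayed integer algebras), since that bookkeeping is the only feature distinguishing the integer setting. Throughout I use the correspondence recalled above between a displayed algebra $B$ over $A$ and its total algebra $T \defeq \sm{a:A} B(a)$ with projection morphism $\pi : T \to A$; under the standard unfolding, a section of $B$ amounts to an algebra morphism $g : A \to T$ with $\pi \cc g = \id_A$. The key observation for the integer setting is that if $B$ is a displayed integer algebra (meaning each successor map $s_a : B(a) \to B(s(a))$ is an equivalence), then $T$ is an integer algebra: its successor $(a,b) \mapsto (s(a), s_a(b))$ is an equivalence because $s$ and each $s_a$ is. And $\pi$ is a morphism of integer algebras. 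Since integer algebra morphisms are, by definition, just morphisms of the underlying natural number algebras, the types of integer algebra morphisms into or out of these objects coincide with the corresponding natural number algebra morphism types.

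Suppose first that $A$ is initial as an integer algebra, and let $B$ be a displayed integer algebra over $A$, with total algebra $T$ and projection $\pi : T \to A$. Post-composition with $\pi$ gives a map from the type of integer algebra morphisms $A \to T$ to the type of integer algebra morphisms $A \to A$. Both types are contractible by initiality of $A$, so this map has contractible fibers; in particular, its fiber over $\id_A$ is inhabited. By the correspondence above, this fiber is equivalent to the type of sections of $B$, so $B$ has a section. Hence $A$ has elimination.

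Conversely, suppose $A$ has elimination, and let $B$ be an arbitrary integer algebra; we must show the type of morphisms $A \to B$ is contractible. For inhabitedness, consider the constant displayed algebra $a \mapsto B$ with point $z : B$ and successor maps $s_a \defeq s$; since $s$ is an equivalence this is a displayed integer algebra, and a section of it is exactly a morphism $A \to B$, which elimination supplies. For uniqueness, given morphisms $f, g : A \to B$, consider the displayed algebra $P(a) \defeq (f(a) =_B g(a))$, with point the composite of $f(z) = z$ (from $f$) and $z = g(z)$ (from $g$), and with successor map sending $p : f(a) =_B g(a)$ to the composite of $f(s(a)) = s(f(a))$ (from $f$), $\ap_s(p) : s(f(a)) =_B s(g(a))$, and $s(g(a)) = g(s(a))$ (from $g$). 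This successor map is a composite of concatenations with fixed paths, which are equivalences, and of $\ap_s$, which is an equivalence because $s$ has a two-sided inverse (as established in the discussion of isomorphisms and equivalences above). So $P$ is a displayed integer algebra, and elimination gives a section: a family $h(a) : f(a) =_B g(a)$ compatible with the algebra structure. Unfolding the identity type of the $\Sigma$-type of algebra morphisms, this data is exactly a proof that $f = g$ as morphisms. Thus the morphism type is a proposition, and being inhabited it is contractible; so $A$ is initial.

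The only work beyond the natural number case is checking that the two auxiliary displayed algebras are \emph{integer} displayed algebras: this is immediate for the constant one, and for the path-family $P$ it is the short computation above, whose crux is that $\ap_s$ is an equivalence whenever $s$ is. Everything else — the displayed/sliced correspondence, the fact that a map between contractible types has contractible fibers, and the identification of the section data for $P$ with an equality of morphisms — is routine and identical to the standard argument behind \cref{nat-up-vs-elim}.
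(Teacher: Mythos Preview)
Your proof is correct and is exactly the routine unfolding the paper leaves implicit: the lemma is stated with a bare \qed as ``analogous to the case of natural numbers,'' and your write-up is precisely that standard argument with the one extra check (that the total algebra, the constant displayed algebra, and the path-family displayed algebra are all integer algebras) carried out explicitly.
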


Our goal in the rest of this article is to construct a natural number type from an integer type.
We thus now make the standing assumption of an integer type $\Z$, with element denoted $Z : \Z$ and automorphism denoted $S : \Z \simeq \Z$ (to distinguish from the natural number algebras we will consider).
Note that any other integer type is equivalent to it (by universality).
It thus makes sense to speak of \emph{the} integer type $\Z$.

The rest of this paper is devoted to proving the following:

\begin{theorem}%
\label{naturals-from-integers}
Assume an integer type.
Then we have a natural number type.
\end{theorem}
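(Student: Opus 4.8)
The plan is to identify the natural numbers with the \emph{non-negative} integers, carved out of $\Z$ by a ``sign'' function $\sigma \colon \Z \to \two$. The one genuinely new ingredient is an even/odd decomposition of $\Z$: equip $\Z \sqcup \Z$ with the integer-algebra whose element is $\tau_0(Z)$ and whose automorphism interleaves the summands, $\tau_0(n) \mapsto \tau_1(n)$ and $\tau_1(n) \mapsto \tau_0(S(n))$ (visibly an equivalence, inverse $\tau_0(n) \mapsto \tau_1(S^{-1}(n))$, $\tau_1(n) \mapsto \tau_0(n)$). Initiality of $\Z$ yields a morphism $d \colon \Z \to \Z \sqcup \Z$, which is an equivalence: its inverse is $[d_0, S \cc d_0]$, where $d_0 \colon \Z \to \Z$ is the doubling morphism (the unique integer-algebra morphism $(\Z, Z, S) \to (\Z, Z, S \cc S)$), the two round-trips being identities by a short integer induction (using, as in the excerpt, that $\ap$ along an equivalence is an isomorphism).

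Next, for \emph{any} types $A, B$ with a point of $B$ and an equivalence $A \simeq B \sqcup A$, the threefold coproduct $A \sqcup B \sqcup A$ carries an integer-algebra structure: the distinguished element is the chosen point of the middle $B$, and the automorphism is obtained by gluing, along associativity of $\sqcup$, the equivalence $A \xrightarrow{\sim} A \sqcup B$ on the left copy (the ``predecessor'' direction) with the equivalence $B \sqcup A \xrightarrow{\sim} A$ on the right block (the ``successor'' direction). Apply this with $A = B = \Z$ and the equivalence $d$. Initiality of $\Z$ gives a morphism $e \colon \Z \to \Z \sqcup \Z \sqcup \Z$, and we set $\sigma \defeq c \cc e$ where $c \colon \Z \sqcup (\Z \sqcup \Z) \to \two$ sends the left copy to $0$ and the right block to $1$. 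From the explicit automorphism and no-confusion for coproducts we extract: $\sigma(Z) = 1$; $\sigma(n) = 1$ implies $\sigma(S(n)) = 1$; $\sigma(S^{-1}(Z)) = 0$; and — the sharp statement — $\sigma(n) = 0$ together with $\sigma(S(n)) = 1$ forces $n = S^{-1}(Z)$. Equivalently, $\sigma$ presents $\Z$ as $\Z_{<0} \sqcup \Z_{\geq 0}$ with $S$ restricting to an equivalence $\Z_{<0} \simeq \Z_{\leq 0}$ and to a map $\Z_{\geq 0} \to \Z_{\geq 0}$. The sharp statement is the fiddly point: it says $e$ hits the middle copy of $\Z \sqcup \Z \sqcup \Z$ only at $Z$, proved by tracking, via descent for the coproduct and the fact that $\Z$ is freely generated, which of the three summands $e(S^{\pm n}(Z))$ lands in.

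Set $\N \defeq \sm{n : \Z} (\sigma(n) = 1)$ — a subtype of $\Z$, since $\two$ is a set — with element $(Z, -)$ and with $s$ the restriction of $S$ (legitimate by the second property of $\sigma$). By \cref{nat-up-vs-elim} it suffices to equip $\N$ with elimination. Using initiality of $\Z$ we also define addition and negation, hence subtraction, and a relation $m \leq k \defeq (\sigma(k - m) = 1)$; the properties of $\sigma$ give an equivalence $(m \leq S(k)) \simeq (m \leq k) \sqcup (m = S(k))$ for $m : \N$, $k : \Z$, so the subtype $\{m : \N \mid m \leq k\}$ gains at most one element as $k$ passes to $S(k)$. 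Given a displayed natural number algebra $P$ over $\N$, let $\tilde{T}(k)$ for $k : \Z$ be the type of \emph{bounded partial sections}: assignments $f(m) : P(m)$ for all $m : \N$ with $m \leq k$, satisfying $f(s(m)) = s(f(m))$ whenever $s(m) \leq k$ and $f(z) = z$ when $z \leq k$. For $k < 0$ the index type is empty, and for $k \geq 0$ every value is forced from $z$ upward, so in all cases $\tilde{T}(k)$ is contractible; moreover the maps $\tilde{T}(k) \to \tilde{T}(S(k))$ that extend a bounded partial section by its unique forced value at the possible new index (coming from the displayed successor of $P$, or from its distinguished element when the new index is $z$) are equivalences — their inverses being restriction, the composites identities precisely because of the compatibility conditions. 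Hence $\tilde{T}$ is a displayed integer algebra over $\Z$; \cref{integer-up-vs-elim} provides a section $\theta$, and setting $f(m) \defeq \theta(m)(m)$ defines a section of $P$, the section equations being read off from the coherence of the bounded partial sections. Thus $\N$ has elimination, proving the theorem.

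I expect the main obstacle to lie in this last step rather than in the construction of $\sigma$. A displayed \emph{natural number} algebra supplies only non-invertible structure maps, so $P$ cannot simply be transported to a displayed integer algebra over $\Z$; the bounded-partial-sections construction is designed to launder these maps into equivalences, and it succeeds only because the index types $\{m : \N \mid m \leq k\}$ behave like honest initial segments — which is exactly the sharp ``single transition point'' property of $\sigma$, and where effectivity of finite coproducts is indispensable. A secondary nuisance is extracting that sharp property, and the auxiliary facts about $\sigma$, $\leq$, addition and negation, cleanly from integer induction alone.
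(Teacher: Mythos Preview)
Your overall architecture matches the paper's first approach closely: the even/odd splitting $\Z \simeq \Z \sqcup \Z$, the threefold decomposition via \cref{int-from-halves}, and the partial-sections machinery for elimination are all present. The genuine gap is the ``sharp statement'' that $\Z^0$ (the preimage of the middle summand under $e$) is contractible, equivalently that $\sigma$ flips from $0$ to $1$ exactly once. Your justification, ``tracking which of the three summands $e(S^{\pm n}(Z))$ lands in'', is circular: you cannot range over $n$ before you have $\N$, and there is no evident $\Z$-induction predicate that does the job, since the obvious candidates are not invariant under $S$. Note also that contractibility of $\Z^0$ makes $Z$ a detachable point of $\Z$ and hence forces $Z =_\Z Z$ to be contractible; the paper explicitly remarks (\cref{integers-set} and the footnote in the introduction) that $0$-truncatedness of $\Z$ does not appear to be directly provable in this setting, so you should not expect a cheap argument here.

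The paper's first approach never claims $\Z^0$ is contractible. It takes $M \defeq \Z^0 \sqcup \Z^+$, which is exactly your $\N$, and proves it is freely generated by the \emph{whole family} $\iota_z : \Z^0 \to M$ of zero-like elements together with $s : M \to M$ (\cref{M-universal-property}), using essentially your bounded-partial-sections idea. A separate rectification step (\cref{rectify-nat}) then passes to the fixpoints of the endomap of $M$ that sends every $\iota_z(y)$ to $\iota_z(Z)$; this is where the single chosen zero is isolated. Concretely, your key equivalence $(m \le S(k)) \simeq (m \le k) \sqcup (m = S(k))$ is unavailable without the sharp statement: what the decomposition actually yields is $(m \le S(k)) \simeq (m \le k) \sqcup (S(k)-m \in \Z^0)$, and you have no way to identify the second summand with $m = S(k)$. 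If you relax your partial-sections argument to accept a $\Z^0$-indexed family of base cases, it goes through and recovers \cref{M-universal-property}; but you then still need the rectification to finish.
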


We provide two separate proofs, a direct one in \cref{direct} and an indirect one in \cref{indirect}.

\section{An equivalence $\Z \simeq \Z \sqcup \Z$}

Our starting point for constructing the natural numbers is the following observation.

\begin{lemma}[\formalizationref{Doubling.halve-iso}{Doubling.html\#1151}]%
\label{splitting-equivalence}
We have $\Z \simeq \Z \sqcup \Z$.
\end{lemma}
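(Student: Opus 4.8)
The plan is to build the equivalence directly by integer induction (i.e.\ elimination for the integer type $\Z$), exploiting that $\Z \sqcup \Z$ carries a natural integer algebra structure whose automorphism "shifts and swaps" the two copies. Concretely, on $\Z \sqcup \Z$ define the endofunction $T$ by $T(\tau_0(n)) \defeq \tau_1(n)$ and $T(\tau_1(n)) \defeq \tau_0(S(n))$. I would first check $T$ is an equivalence: its inverse is $\tau_1(n) \mapsto \tau_0(n)$, $\tau_0(n) \mapsto \tau_1(S^{-1}(n))$, and the round-trip identities follow from those for $S$ together with the computation rules for $[-,-]$ on coproducts. Pick the basepoint $\tau_0(Z) : \Z \sqcup \Z$. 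This makes $(\Z \sqcup \Z, \tau_0(Z), T)$ an integer algebra.

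Now apply initiality of $\Z$ to get the unique integer algebra morphism $h : \Z \to \Z \sqcup \Z$. To see $h$ is an equivalence, I would exhibit an inverse and, rather than computing a homotopy inverse from scratch, use uniqueness: both composites are algebra endomorphisms of an integer algebra. For the inverse map, use the universal property of coproducts: define $k : \Z \sqcup \Z \to \Z$ by $k \defeq [k_0, k_1]$, where $k_0, k_1 : \Z \to \Z$ are chosen so that $k \cc h = \id_\Z$ and $h \cc k = \id_{\Z \sqcup \Z}$. The cleanest route is to first observe that $h$ itself, post-composed with the obvious map $\Z \sqcup \Z \to \two$ splitting off "even vs.\ odd", realizes $\Z$ as the disjoint union of its even and odd parts, and then define $k_0(n) \defeq$ "the integer $m$ with $2m$ tracked by $n$" and $k_1(n) \defeq$ "the integer $m$ with $2m+1$ tracked by $n$" — but since we have no $\N$, these are themselves built by integer induction on $\Z$. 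The slick alternative: define $k_0 \defeq \id_\Z$ as an algebra morphism $\Z \to \Z$ (into the $S^2$-algebra? no) — here I must be careful, so let me instead take the honest inductive route below.

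The honest approach: unfold what $h$ does. Since $h$ is an algebra morphism, $h(Z) = \tau_0(Z)$ and $h(S(n)) = T(h(n))$. So $h$ sends $Z, S^2(Z), S^4(Z), \dots$ into $\tau_0$ of $Z, S(Z), S^2(Z),\dots$ and $S(Z), S^3(Z), \dots$ into $\tau_1$ of $Z, S(Z), \dots$ — i.e.\ $h$ is the "halving, with parity" map. For the inverse, build $k : \Z \sqcup \Z \to \Z$ via $[-,-]$ from two maps $\Z \to \Z$; the first, $d : \Z \to \Z$, should satisfy $d(Z) = Z$ and $d(S(n)) = S^2(d(n))$, obtained as the unique morphism into the integer algebra $(\Z, Z, S^2)$ ($S^2$ is an equivalence since $S$ is); the second is $S \cc d$. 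Then verify $h \cc k = \id$ and $k \cc h = \id$. The latter, $k \cc h : \Z \to \Z$, is an algebra endomorphism of the initial integer algebra $\Z$, hence equals $\id_\Z$ by uniqueness (one checks it is a morphism: it preserves $Z$ and commutes with $S$ by a short computation using the defining equations of $T$, $d$, and $k = [d, S\cc d]$). For $h \cc k = \id_{\Z\sqcup\Z}$, use the dependent elimination / universal property of the coproduct: it suffices to check on $\tau_0(n)$ and $\tau_1(n)$ for all $n : \Z$, and each of these reduces to an identity of maps $\Z \to \Z \sqcup \Z$ provable by integer induction on $n$.

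The main obstacle I anticipate is not any single step but the bookkeeping: because we lack $\N$ and even a universe, every "define a map by recursion on a natural-number-like pattern" must be repackaged as a morphism out of the initial integer algebra into a suitable $(\Z, z, S^k)$-algebra, and the verifications that the two round-trips are algebra (endo)morphisms require propagating the coproduct computation rules (which hold only up to identity type) through the algebra-morphism equations. Uniqueness from initiality of $\Z$ is the crucial lever that lets us avoid higher coherence: once a round-trip is shown to be \emph{an} algebra endomorphism of $\Z$, it is automatically $\id_\Z$, and a parallel argument via the coproduct's universal property handles the other side.
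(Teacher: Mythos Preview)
Your proposal is correct and is essentially the paper's proof with the abstractions stripped away: your $T$ is the paper's twisted-rotation algebra $\Tw(\Z)$, your $d$ is the paper's $\double$ (the unique morphism $\Z \to \Sq(\Z) = (\Z,Z,S^2)$), your $k = [d, S\cc d]$ is the paper's $[\double, S\cc\double]$, and $k\cc h = \id_\Z$ is handled identically by initiality. The one place where you are vaguer than the paper is the check $h\cc k\cc\tau_0 = \tau_0$: the paper observes that $\tau_0 : \Z \to (\Z\sqcup\Z,\tau_0(Z),T^2)$ is itself an integer-algebra morphism, so both $h\cc d$ and $\tau_0$ are morphisms $\Z \to \Sq(\Tw(\Z))$ and hence equal by initiality; it then derives the $\tau_1$ case from the $\tau_0$ case by a one-line calculation using that $h$ and $k$ intertwine $T$ with $S$. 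Your ``integer induction on $n$'' amounts to exactly this once you identify the target algebra as carrying the automorphism $T^2$ rather than $T$, which you have already noticed on the $\Z$ side when defining $d$.
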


\begin{proof}
We first define the operation of \emph{squaring} integer algebras.
Given an integer algebra $X \equiv (X, z, s)$, its square $\Sq(X)$ is the integer algebra $(X, z, s \cc s)$.
Note that $s \cc s$ is an equivalence since $s$ is.
This operation is functorial: it has an evident action on morphisms of integer algebras (we do not need any higher witnesses of functoriality).
Furthermore, the functorial action reflects equivalences: if $\Sq(f)$ is invertible, then so is $f$.

Next, for an integer algebra $X \equiv (X, z, s)$, we define the \emph{twisted rotation} integer algebra $\Tw(X)$ with carrier $X \sqcup X$, element $\tau_0(z)$, and automorphism $r$ mapping $\tau_0(x)$ to $\tau_1(x)$ and $\tau_1(x)$ to $\tau_0(s(x))$.
This uses the universal property of binary coproducts to define $r$.

The automorphism of $\Sq(\Tw(X))$ maps $\tau_0(x)$ to $\tau_0(s(x))$ and $\tau_1(x)$ to $\tau_1(s(x))$.
That is, on each component, it is just given by the original automorphism $s$.
In particular, $\tau_0$ forms an algebra morphism from $X$ to $\Sq(\Tw(X))$.

By initiality of $\Z$, we have an algebra map $\Z \to \Sq(\Z)$.
Let us call its underlying map $\double : \Z \to \Z$.
Note that
\[
\bracks{\double, S \cc \double} : \Z \sqcup \Z \to \Z
\]
forms an algebra map from $\Tw(\Z)$ to $\Z$.

Again by initiality of $\Z$, we have an algebra map $c : \Z \to \Tw(\Z)$.
To finish the proof, it suffices to show that its underlying function $\Z \to \Z \sqcup \Z$ is invertible.
By initiality of $\Z$, the algebra morphism composite $\bracks{\double, S \cc \double} \cc c$ is the identity, hence also on underlying functions.
It remains to show that the underlying function of the algebra endomorphism
\[
u \defeq c \cc \bracks{\double, S \cc \double}
\]
on $\Tw(\Z)$ is the identity.

Note that $\Sq(u)$ is an algebra endomorphism on $\Sq(\Tw(\Z))$.
By initiality of $\Z$, we have $\Sq(u) \cc \tau_0 = \tau_0$.
On underlying maps, this means $u \cc \tau_0 = \tau_0$.
It remains to check $u \cc \tau_1 = \tau_1$.
Given $x : \Z$, we calculate
\begin{align*}
u(\tau_1(x))
&=
c(S(\double(x)))
\\&=
r(c(\double(x)))
\\&=
r(u(\tau_0(x)))
\\&=
r(\tau_0(x))
\\&=
\tau_1(x).
\qedhere
\end{align*}
\end{proof}

\begin{remark}
We note an abstract perspective on \cref{splitting-equivalence}.
We have an endo-adjunction on the (higher) category of natural number algebras: the left adjoint is twisted rotation and the right adjoint is squaring.
Both functors preserve integer algebras, so the adjunction descends to the full subcategory of integer algebras.
As left adjoints preserve initial objects, the twisted rotation $\Tw(\Z)$ is again initial,
   so it is equivalent to $\Z$.

To render this reasoning in our type theory without universes, we represent (higher) categories as a mixed external-internal notion: external at object level and internal at morphism level and above.
That is, we have (external) \emph{sets} of objects (since those involve types in the case of algebras), but \emph{types} of morphisms.
As usual, the construction depends only on a finite approximation of the higher-categorical coherence tower, in this case just composition of morphisms and one level of coherence.
For the adjunction, we just need the equivalence between hom-types without naturality.
The given proof of \cref{splitting-equivalence} can be seen as the unravelling of the abstract perspective in this manner.

(This mixed reasoning can be formalized in two-level type theory~\cite{two-level-type-theory} using a variation of the notion of \emph{wild category} with an outer type of objects and inner types of morphisms.)
\end{remark}

\section{Approximating the integers via halves}

If we already had the natural numbers $\N$, we could describe the integers as being built out of two copies of the natural numbers, one for the positive and one for the negative half:
\begin{equation}%
\label{int-decomposition-nat}
\Z \simeq \N \sqcup \one \sqcup \N
\end{equation}
Conversely, we may use a decomposition with similar properties to tell us something about the integers, for example when an integer is positive or negative.
This is the motivation behind the following construction.

\begin{construction}[\formalizationref{Signs.shift-equiv}{Signs.html\#141}]%
\label{int-from-halves}
Consider types $A$ and $B$ with an equivalence $e : A \simeq B \sqcup A$.
Then we have an automorphism on $A \sqcup B \sqcup A$ given by reassociating the equivalence
\[
A \sqcup (B \sqcup A) \simeq (A \sqcup B) \sqcup A
\]
where we act using $e$ on the left component and using the inverse of $e$ on the right component.
Given also an element $b : B$, this forms an integer algebra structure on $A \sqcup B \sqcup A$.

In the above situation, initiality of $\Z$ provides us with an algebra morphism from $\Z$ to $A \sqcup B \sqcup A$.
Restricting along the underlying function, the ternary decomposition on the right induces a decomposition
\begin{equation}%
\label{int-decomposition}
\Z \simeq \Z^- \sqcup \Z^0 \sqcup \Z^+
\end{equation}
of $\Z$ into three parts (this uses effectiveness of coproducts -- that the coprojections are disjoint embeddings).
The automorphism $S$ of $\Z$ restricts to separate equivalences $S^- : \Z^- \simeq \Z^- \sqcup \Z^0$ and $S^+ : \Z^0 \sqcup \Z^+ \simeq \Z^+$ that combine to give $S$ via the above decomposition.
Furthermore, since $Z : \Z$ is sent to $\tau_1(b)$, we know that $Z$ lies in the middle component $\Z^0$.
\end{construction}

In the presence of the naturals, one may prove that the decomposition~\eqref{int-decomposition} in fact agrees with the decomposition~\eqref{int-decomposition-nat}.

\section{First approach: direct}%
\label{direct}

We now give a direct construction of an initial natural number algebra.
First we apply \cref{int-from-halves} to the equivalence $\Z \simeq \Z \sqcup \Z$ from \cref{splitting-equivalence} and the element $Z : \Z$ to obtain the decomposition~\eqref{int-decomposition}.
We write $M$ for $\Z^0 \sqcup \Z^+$.
By construction, for $x : \Z$, we have that $x$ lies in $M$ iff $S(x)$ lies in $\Z^+$,
so $S$ restricts to an equivalence $M \simeq \Z^+$.
We write $\iota_z : \Z^0 \to M$ for the coprojection $\tau_0$ and $s : M \to M$
for the composite embedding $M \simeq \Z^+ \hookrightarrow M$, so that
$[\iota_z, s] : \Z^0 \sqcup M \simeq M$ is an equivalence.
We write $\iota_M$ for the embedding $M \hookrightarrow \Z$. Note
that $s : M \to M$ `lies above' $S : \Z \simeq \Z$ in the sense that
$\iota_M \circ s = S \circ \iota_M$.

We will show that $M$ has a universal property close to that of the natural numbers:
it is freely generated by $\iota_z : \Z^0 \to M$ and $s : M \to M$.
Note also that we have an element of $\Z^0$, namely $Z$.
To construct $\N$ from here, we will use a simple rectification argument.

We first explain how to prove that $M$ has the stated universal property.
We essentially follow a well-known strategy for reducing natural number recursion (which
constructs \emph{functions} out of $\N$) to natural number induction
(which proves proposition-valued \emph{predicates} on $\N$)
by considering an appropriate notion of ``partially defined inductive function''.
To this end, we first need a notion of ordering on $\Z$.

\begin{lemma}[\formalizationref{Signs.agda}{Signs.html}]%
\label{lt-properties}
$\Z$ has a proposition-valued relation $<$ with the following properties:
\begin{parts}
\item\label{lt-properties:S-right}
if $x < y$, then $x < S(y)$,
\item\label{lt-properties:S}
if $S(x) < S(y)$, then $x < y$,
\item\label{lt-properties:Z}
if $x : M$ then we do not have $\iota_M(x) < Z$,
\item\label{lt-properties:lt-S}
$x < S(x)$ for all $x$.
\end{parts}
\end{lemma}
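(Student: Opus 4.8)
The plan is to define $x < y$ to mean that the ``difference'' $T_x(y)$ lies in the component $\Z^+$, where $T_x : \Z \simeq \Z$ is a translation operation built by integer induction; the four properties then follow from small computations, the last one being the only real obstacle.

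\emph{Translation and negation.} By initiality of $\Z$, let $T : \Z \to (\Z \simeq \Z)$ be the unique integer algebra morphism into the integer algebra with carrier $\Z \simeq \Z$, element $\id$, and automorphism $(-) \cc S^{-1}$ (whose inverse is $(-) \cc S$). Thus $T_Z = \id$ and $T_{S(x)} = T_x \cc S^{-1}$; morally $T_x$ is ``subtract $x$''. Every $T_x$ commutes with $S$: the assignment $x \mapsto T_x \cc S \cc T_x^{-1}$ sends $Z$ to $S$ and $S(x)$ to $(T_x \cc S^{-1}) \cc S \cc (S \cc T_x^{-1}) = T_x \cc S \cc T_x^{-1}$, so it is an integer algebra morphism from $\Z$ into $(\Z \simeq \Z, S, \id)$; so is the constant map at $S$; hence they coincide by initiality, giving $T_x \cc S = S \cc T_x$. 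Two further one-line initiality arguments comparing algebra morphisms out of $\Z$ give $T_x(x) = Z$ (compare $x \mapsto T_x(x)$ with the constant map at $Z$) and $T_x(Z) = \nu(x)$ (compare $x \mapsto T_x(Z)$ with $\nu$), where $\nu : \Z \to \Z$ denotes the unique integer algebra morphism into $(\Z, Z, S^{-1})$.

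\emph{The relation and the first three properties.} Let $\mathrm{pos}(w) \defeq \sm{v : \Z^+}(w =_\Z \iota_{\Z^+}(v))$, the proposition that $w$ lies in the component $\Z^+$ of~\eqref{int-decomposition}; it is a proposition because $\iota_{\Z^+} : \Z^+ \to \Z$ is an embedding, being a coprojection of a finite coproduct postcomposed with an equivalence. Define $x < y \defeq \mathrm{pos}(T_x(y))$, which is proposition-valued. Recall that $S$ restricts to an equivalence $M = \Z^0 \sqcup \Z^+ \simeq \Z^+$ and that $Z$ lies in $\Z^0 \subseteq M$ (\cref{int-from-halves}); since $\Z^+$ is a summand of $M$, it follows that $\mathrm{pos}(w)$ implies $\mathrm{pos}(S(w))$, and that $\mathrm{pos}(S(Z))$ holds. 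Properties \ref{lt-properties:S-right}, \ref{lt-properties:S} and \ref{lt-properties:lt-S} are then immediate: $x < S(y)$ is $\mathrm{pos}(T_x(S(y))) = \mathrm{pos}(S(T_x(y)))$ (as $T_x$ commutes with $S$), which follows from $x < y$; $S(x) < S(y)$ is $\mathrm{pos}(T_{S(x)}(S(y))) = \mathrm{pos}(T_x(S^{-1}(S(y)))) = \mathrm{pos}(T_x(y))$, i.e.\ $x < y$; and $x < S(x)$ is $\mathrm{pos}(T_x(S(x))) = \mathrm{pos}(S(T_x(x))) = \mathrm{pos}(S(Z))$, which holds.

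\emph{The crux: property \ref{lt-properties:Z}.} Since $T_{\iota_M(x)}(Z) = \nu(\iota_M(x))$, the assertion $\iota_M(x) < Z$ is $\mathrm{pos}(\nu(\iota_M(x)))$, so I must show that $\nu(\iota_M(x))$ does not lie in $\Z^+$ when $x : M$. Let $\tilde\Z \defeq \Z \sqcup \Z \sqcup \Z$ be the integer algebra of \cref{int-from-halves} (element $\tau_1(Z)$, automorphism $\rho$), let $\delta : \Z \to \tilde\Z$ be its initial morphism --- so $\Z^-, \Z^0, \Z^+$ are the preimages under $\delta$ of the left, middle and right summands of $\tilde\Z$ --- and let $\mathrm{flip} : \tilde\Z \simeq \tilde\Z$ swap the two outer summands. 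The key claim is that $\mathrm{flip}$ is an integer algebra isomorphism $(\tilde\Z, \tau_1(Z), \rho) \to (\tilde\Z, \tau_1(Z), \rho^{-1})$: it fixes $\tau_1(Z)$, and $\mathrm{flip} \cc \rho \cc \mathrm{flip} = \rho^{-1}$ because $\rho$ is assembled (in \cref{int-from-halves}) from $e$ acting on one outer summand and $e^{-1}$ on the other, so conjugating by the swap inverts it. Verifying $\mathrm{flip} \cc \rho \cc \mathrm{flip} = \rho^{-1}$, a case analysis on the coprojections after unwinding the definition of $\rho$, is the most delicate step, and the one where property \ref{lt-properties:Z} really bites. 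Granting it, both $\delta \cc \nu$ and $\mathrm{flip} \cc \delta$ are integer algebra morphisms from $\Z$ into $(\tilde\Z, \tau_1(Z), \rho^{-1})$ --- the first because $\delta$ intertwines $S$ with $\rho$ and $\nu$ intertwines $S$ with $S^{-1}$, so $\delta \cc \nu$ intertwines $S$ with $\rho^{-1}$; the second because $\mathrm{flip}$ carries the $\rho$-structure to the $\rho^{-1}$-structure --- so they agree by initiality, giving $\delta(\nu(\iota_M(x))) = \mathrm{flip}(\delta(\iota_M(x)))$. Now $\iota_M(x)$ lies in $M = \Z^0 \sqcup \Z^+$, so $\delta(\iota_M(x))$ lies in the middle or the right summand of $\tilde\Z$; hence $\mathrm{flip}(\delta(\iota_M(x)))$ lies in the middle or the left summand, in particular not the right one. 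But $\nu(\iota_M(x)) \in \Z^+$ would place $\delta(\nu(\iota_M(x)))$ in the right summand, contradicting disjointness of the coprojections. Therefore $\neg(\iota_M(x) < Z)$, as required.
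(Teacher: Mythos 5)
Your proof is correct, and its skeleton---define a subtraction by integer recursion, then read off the sign from the decomposition of \cref{int-from-halves}---is the same as ours; but the orientation you choose genuinely changes where the work lands. We define $x - y$ by recursion on the \emph{second} argument ($x - Z = x$ and $x - S(y) = S^{-1}(x-y)$) and let $x < y$ mean that $x - y$ lies in $\Z^-$; with that choice, property~\ref{lt-properties:Z} is immediate, since $\iota_M(x) - Z = \iota_M(x)$ lies in $M = \Z^0 \sqcup \Z^+$, which is disjoint from $\Z^-$. You recurse on the first argument and test $y - x$ for membership in $\Z^+$, which turns property~\ref{lt-properties:Z} into the statement that negation carries $M$ into $\Z^- \sqcup \Z^0$---a fact that is not free and that you rightly identify as the crux. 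Your proof of it does check out: the swap of the two outer summands of $A \sqcup B \sqcup A$ fixes the basepoint and conjugates the automorphism $\rho$ to $\rho^{-1}$ (a short case analysis on the three coprojections, valid for an arbitrary $e : A \simeq B \sqcup A$), so by initiality it computes $\nu$ on $\Z$ and sends the middle-or-right part to the middle-or-left part. So both arguments work: ours buys property~\ref{lt-properties:Z} for free by arranging that $x < Z$ unfolds to ``$x$ lies in $\Z^-$'', while yours pays for the symmetric choice with an extra (but reusable and independently interesting) lemma that the sign decomposition is compatible with negation. The remaining ingredients of your write-up---propositionality of the relation, the commutation $T_x \cc S = S \cc T_x$, and parts~\ref{lt-properties:S-right}, \ref{lt-properties:S}, \ref{lt-properties:lt-S}---are all fine.
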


\begin{proof}
Using initiality of $\Z$, we can define subtraction on $\Z$ such that $x - Z = x$ and $x - S(y) = S^{-1}(x-y)$.
It can then be proven by integer induction that $S(x) - S(y) = x - y$ and
$x - x = Z$.
We take $x < y$ to mean that $x - y$ lies in $\Z^-$.
All the listed properties can be verified directly.
\end{proof}

We will suppress witnesses of the relation $<$, writing a dash in their place, relying on references to the previous lemma to fill them as required.
This is harmless as $<$ is valued in propositions.

We recall some preliminaries on fixpoints.
Given an endofunction $t$ on a type $X$, we write $\fix(t)$ for the type of \emph{fixpoints} of $t$:
\[
\fix(t) \defeq \sm{x : X} t(x) = x
\rlap{.}\]
The below ``rolling rule'' is useful for manipulating fixpoints.

\begin{lemma}[\formalizationref{RollingRule.agda}{RollingRule.html}]%
\label{rolling}
For $f : X \to Y$ and $g : Y \to X$, we have an equivalence
\[
\fix(g \cc f) \simeq \fix(f \cc g)
\rlap{.}\]
\end{lemma}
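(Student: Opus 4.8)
The plan is to realize both sides of the desired equivalence as rearrangements of a single symmetric intermediate type of ``span-shaped fixpoint data''. Define
\[
P \defeq \sm{x:X} \sm{y:Y} (f(x) =_Y y) \times (g(y) =_X x)
\rlap{;}\]
its elements are quadruples $(x, y, p, q)$ with $p : f(x) =_Y y$ and $q : g(y) =_X x$. This type is visibly symmetric: exchanging $X \leftrightarrow Y$, $f \leftrightarrow g$, and $p \leftrightarrow q$ carries it to the analogue built from $f \cc g$. The rolling rule will then follow by comparing $P$ with $\fix(g \cc f)$ and with $\fix(f \cc g)$ separately.

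First I would show $P \simeq \fix(g \cc f)$. Reassociating the dependent sums presents $P$ as $\sm{x:X} \sm{y:Y} \sm{p : f(x) =_Y y} (g(y) =_X x)$, and for each fixed $x : X$ we may contract the singleton $\sm{y:Y} (f(x) =_Y y)$ (whose center is $f(x)$), identifying $\sm{y:Y} \sm{p : f(x) =_Y y} (g(y) =_X x)$ with $g(f(x)) =_X x$. Hence $P \simeq \sm{x:X} (g \cc f)(x) =_X x \equiv \fix(g \cc f)$. By the symmetry of $P$ noted above --- concretely, commute the two outermost sums and the binary product of the two path types to rewrite $P$ as $\sm{y:Y} \sm{x:X} \sm{q : g(y) =_X x} (f(x) =_Y y)$, then contract the singleton $\sm{x:X} (g(y) =_X x)$ --- the identical computation yields $P \simeq \fix(f \cc g)$. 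Composing the two equivalences gives $\fix(g \cc f) \simeq \fix(f \cc g)$.

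The main ``obstacle'' is merely routine bookkeeping: one must orient the two path components so that the sum being contracted is genuinely a singleton in the intended variable, and one uses the standard facts that $\Sigma$ commutes with $\Sigma$ and distributes over binary products. All of this follows from $\Sigma$, $\mathsf{Id}$, and function extensionality, so the argument stays within our setting; the only real idea is the choice of $P$. Alternatively, one can argue directly, sending $(x, p : (g \cc f)(x) =_X x)$ to $(f(x), \ap_f(p))$ and $(y, q : (f \cc g)(y) =_Y y)$ to $(g(y), \ap_g(q))$ and checking that both round-trips are the identity using $\ap_{g \cc f} = \ap_g \cc \ap_f$ together with the transport formula for the family $x \mapsto (g \cc f)(x) =_X x$; I prefer the route through $P$ since it requires no path computation at all.
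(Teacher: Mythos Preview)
Your proof is correct and is exactly the paper's argument: both fixpoint types arise from the single symmetric type $\sm{x:X}\sm{y:Y} (f(x)=y)\times(g(y)=x)$ by contracting singletons. The paper's proof is just your first two paragraphs compressed into one sentence.
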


\begin{proof}
Both types arise from
\[
\sm{x : X}{y : Y} (f(x) = y) \times (g(y) = x)
\]
by contracting singletons.
\end{proof}
Note that the forward map ${\fix(g \cc f) \to \fix(f \cc g)}$ is $f$ applied to the first component, and the first component of the inverse map is similarly given by $g$.
This follows from unfolding the proof above.

\subsection{The universal property of $M$}

This subsection is devoted to proving the universal property of $M$:

\begin{proposition}[\formalizationref{M.M-ind}{M.html\#3387}]%
\label{M-universal-property}
Let $A$ be a family over $M$ together with
\begin{itemize}
\item $z_A(x) : A(\iota_z(x))$ for $x : \Z^0$,
\item $s_A(a) : A(s(x))$ for (implicit) $x : M$ with $a : A(x)$.
\end{itemize}
Then we have $g(x) : A(x)$ for $x : M$ such that:
\begin{itemize}
\item
$g(\iota_z(x)) = z_A(x)$ for $x : \Z^0$,
\item
$g(s(x)) = s_A(g(x))$ for $x : M$.
\end{itemize}
\end{proposition}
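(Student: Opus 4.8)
The plan is to follow the well-known strategy for reducing recursion to induction: we approximate a section of $A$ over $M$ by \emph{partial sections} defined on the parts of $M$ lying below integers $y$, and then assemble these using that $\Z$ has elimination. For $y : \Z$, write
\[
M_{<y} \defeq \sm{x : M} (\iota_M(x) < y)
\rlap{,}\]
with projection $\pi : M_{<y} \to M$. From \cref{lt-properties} we extract two monotonicity-type facts: the part of $M$ below $y$ sits inside the part below $S(y)$ (since $x < y$ implies $x < S(y)$); and if an $s$-constructor $s(x')$ lies below $S(y)$, then $x'$ lies below $y$ (since $\iota_M(s(x')) = S(\iota_M(x'))$ and $S(a) < S(b)$ implies $a < b$). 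Combining these, every $s$-constructor lying in some $M_{<y}$ has its predecessor in $M_{<y}$. Using this together with the equivalence $[\iota_z, s] : \Z^0 \sqcup M \simeq M$ --- so that each element of $M$ is uniquely an $\iota_z$- or an $s$-constructor --- we obtain an endofunction $\Phi_y$ of $\prd{w : M_{<y}} A(\pi(w))$ characterised by $\Phi_y(g)(\iota_z(x_0)) = z_A(x_0)$ and $\Phi_y(g)(s(x')) = s_A(g(x'))$. We set $R(y) \defeq \fix(\Phi_y)$, the type of partial sections of $A$ up to $y$.

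Two facts drive the argument. First, $R(Z)$ is contractible: by \cref{lt-properties} (that $\iota_M(x) < Z$ never holds for $x : M$), the type $M_{<Z}$ is empty, so $\prd{w : M_{<Z}} A(\pi(w))$ is contractible, $\Phi_Z$ is homotopic to the identity, and hence $\fix(\Phi_Z)$ is contractible. Second, $R(y) \simeq R(S(y))$ for every $y$: the inclusion $M_{<y} \hookrightarrow M_{<S(y)}$ gives a restriction map $\alpha : \prd{w : M_{<S(y)}} A(\pi(w)) \to \prd{w : M_{<y}} A(\pi(w))$, and in the other direction we let $\beta$ be defined by the same recursion clauses as $\Phi_y$ but over $M_{<S(y)}$ --- this makes sense because an $s$-constructor in $M_{<S(y)}$ has its predecessor in $M_{<y}$, as noted above. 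One then checks the identities $\alpha \circ \beta = \Phi_y$ and $\beta \circ \alpha = \Phi_{S(y)}$, so the rolling rule \cref{rolling} gives
\[
R(y) \equiv \fix(\Phi_y) \simeq \fix(\Phi_{S(y)}) \equiv R(S(y))
\rlap{;}\]
by the explicit form of the rolling-rule equivalence, its inverse $R(S(y)) \to R(y)$ is restriction along $M_{<y} \hookrightarrow M_{<S(y)}$.

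Since being contractible is a proposition, these two facts make contractibility of $R(y)$ into a displayed integer algebra over $\Z$, so by elimination for $\Z$ (\cref{integer-up-vs-elim}) the type $R(y)$ is contractible for all $y$. To read off $g$, observe that $x : M$ lies in $M_{<S(\iota_M(x))}$ (as $\iota_M(x) < S(\iota_M(x))$ by \cref{lt-properties}), so we may let $g(x)$ be the value at $x$ of the centre $h_x$ of the contractible type $R(S(\iota_M(x)))$. As $h_x$ is a fixpoint of $\Phi_{S(\iota_M(x))}$, it satisfies the defining clauses of $\Phi$; the $\iota_z$-clause immediately gives $g(\iota_z(x_0)) = z_A(x_0)$. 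For the $s$-clause, the restriction maps among the contractible types $R(-)$ are equivalences (second fact) and hence send centre to centre; matching up the relevant segments via \cref{lt-properties} and using that $<$ is a proposition, the centre over $S(\iota_M(s(x)))$ restricts to the centre over $S(\iota_M(x))$, whence $g(s(x)) = s_A(g(x))$ follows from the $s$-clause of $\Phi$.

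The main obstacle is the second fact: choosing the endofunctions $\Phi_y$ so that the identities $\alpha \circ \beta = \Phi_y$ and $\beta \circ \alpha = \Phi_{S(y)}$ hold and the rolling rule applies. The content of these identities is the combinatorial observation that passing from $M_{<y}$ to $M_{<S(y)}$ only adjoins elements whose $\Phi$-value is forced by data on $M_{<y}$ --- the $\iota_z$-constructors contributing $z_A$ and the $s$-constructors $s(x')$ contributing $s_A(g(x'))$ with $x'$ already below $y$ --- which in turn rests on unfolding $<$ via subtraction and tracking how $S$ acts on the decomposition $\Z \simeq \Z^- \sqcup \Z^0 \sqcup \Z^+$. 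Once this is in place, the rolling rule packages the unique one-step extendibility of partial sections into the equivalence $R(y) \simeq R(S(y))$ with little further work.
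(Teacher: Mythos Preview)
Your proposal is correct and follows essentially the same route as the paper. Your $\Phi_y$ is exactly the paper's composite $\res_y \circ \ext_y$, your $\alpha,\beta$ are the paper's $\res_y,\ext_y$, and your identity $\beta\circ\alpha = \Phi_{S(y)}$ is the paper's commutation lemma $\ext_u\circ\res_u = \res_{S(u)}\circ\ext_{S(u)}$; both then apply the rolling rule and read off $g$ by evaluating at $x$ in the fixpoint over $S(\iota_M(x))$. The one genuine (minor) difference is that you prove each $R(y)$ is \emph{contractible} and then use ``centre goes to centre'' to obtain the coherence needed for $g(s(x)) = s_A(g(x))$, whereas the paper merely constructs an element $f(u):\indfun(u)$ and tracks explicitly that the inductive step is given by $\ext$ on first components. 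The paper even remarks that your strengthening to contractibility is available; it buys you a slightly cleaner extraction step at no real extra cost.
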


Throughout this subsection, we fix $A$ together with $z_A$ and $s_A$ as above.
For $x : M$ and $u : \Z$ we take $x <' u$ to mean $\iota_M(x)  < u $.
We define for $u : \Z$ a type
\[
\pfun(u) \defeq \prd{x:M} (x <' u) \to A(x)
\]
of partial sections of $A$ defined below $u$.
Intuitively, an element of $\pfun(u)$ is a section of $A$ defined on a finite prefix of $M$.
For $u : \Z$, we have a \emph{restriction} map
\[
\res_u : \pfun(S(u)) \to \pfun(u)
\]
using \cref{lt-properties:S-right} of \cref{lt-properties} and an \emph{extension} map
\[
\ext_u : \pfun(u) \to \pfun(S(u))
\]
given by a case distinction using the equivalence $\Z^0 \sqcup M \simeq M$:
\begin{itemize}
\item $\ext_u(f,\iota_z(x),-) \defeq z_A(x)$ for $x$ in $\Z^0$,
\item $\ext_u(f,s(x),-) = s_A(f(x,-))$ by \cref{lt-properties:S} of \cref{lt-properties}.
\end{itemize}

\begin{lemma}[\formalizationref{M.ext-res-eq-res-ext}{M.html\#1690}]%
\label{commute}
The operations $\res$ and $\ext$ commute: for $u : \Z$, we have
\[
\ext_u \cc \res_u = \res_{S(u)} \cc \ext_{S(u)}
\rlap{.}\]
\end{lemma}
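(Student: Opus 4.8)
The plan is to check the equation pointwise. Both composites $\ext_u \cc \res_u$ and $\res_{S(u)} \cc \ext_{S(u)}$ are endofunctions of $\pfun(S(u))$, so fix $f : \pfun(S(u))$; by function extensionality it suffices to identify, for every $x : M$ and every witness of $x <' S(u)$, the two elements
\[
\ext_u(\res_u(f))(x, -) \qquad\text{and}\qquad \res_{S(u)}(\ext_{S(u)}(f))(x, -)
\]
of $A(x)$. Since $<$ is proposition-valued, the witnesses of $<$ produced along the way by \cref{lt-properties:S-right} and \cref{lt-properties:S} of \cref{lt-properties} are all interchangeable, and I suppress them (writing a dash) as before.

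To compute both sides I would case-split on $x : M$ using the equivalence $\Z^0 \sqcup M \simeq M$ ($\res$ being uniform in its $M$-argument, while $\ext$ is defined by cases over this decomposition). If $x$ arises from $\iota_z(y)$ with $y : \Z^0$, then unfolding $\ext$ on the left gives $z_A(y)$, whereas on the right $\res_{S(u)}$ only reindexes the witness and then $\ext_{S(u)}$ applied to $\iota_z(y)$ again gives $z_A(y)$; so both sides equal $z_A(y)$. If $x$ arises from $s(y)$ with $y : M$, then on the left unfolding $\ext$ yields $s_A(\res_u(f)(y,-)) = s_A(f(y,-))$, while on the right $\res_{S(u)}(\ext_{S(u)}(f))(s(y),-) = \ext_{S(u)}(f)(s(y),-) = s_A(f(y,-))$; here one uses $\iota_M(s(y)) = S(\iota_M(y))$ to put the relevant witness into the shape required by \cref{lt-properties:S} of \cref{lt-properties}. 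The two resulting values $f(y,-)$ differ only in the $<$-witness supplied to $f$, which is irrelevant, and $s_A$ respects this identification.

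The one point needing care is that the computation rules for the coproduct eliminator used to define $\ext$ hold only up to identity type, so each ``unfolding'' above is really an application of the corresponding propositional $\beta$-rule; threading these identifications, together with the proof-irrelevance of $<$, through the two cases is the main obstacle — but it is purely mechanical, since $\res$ and $\ext$ were set up precisely so as to commute coordinatewise over the decomposition $\Z^0 \sqcup M \simeq M$.
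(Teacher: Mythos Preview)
Your proof is correct and follows exactly the paper's approach: fix $f : \pfun(S(u))$, apply function extensionality, case-split on $x$ along the equivalence $\Z^0 \sqcup M \simeq M$, and unfold the definitions of $\res$ and $\ext$ in each branch. The paper's proof is terser (it just says ``each case is direct by unfolding definitions''), but the content is identical; your additional remarks about propositional $\beta$-rules and irrelevance of $<$-witnesses are fair observations that the paper leaves implicit.
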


\begin{proof}
For each $f : \pfun(S(u))$ and $x : M$ with $x <' u$, we have to prove an equality in $A(x)$.
We do a case distinction on $x$ using the equivalence $\Z^0 \sqcup M \simeq M$.
Each case is direct by unfolding definitions.
\end{proof}

Now we define for $u : \Z$ a type
\[
\indfun(u) \defeq \fix(\res_u \cc \ext_u)
\]
of partial sections $f$ of $A$ defined below $u$ together with a witness that $f$ is \emph{inductive}.
To understand this terminology, note that $\res_u(\ext_u(f)) = f$ unfolds to the recursive equation
\[
f(\iota_z(x),-) = z_A(x)
\]
for $x : \Z^0$ with $\iota_z(x) <' u$ and
\[
f(s(x),-) = s_A(f(x),-)
\]
for $x : M$
with $s(x) <' u$.
We defined $\indfun$ in this more compact way in order to obtain a
simple proof of the following result.
\begin{lemma}[\formalizationref{M.M-ind-aux}{M.html\#3302}]%
For all $u : \Z$, we have an element $f(u)$ of $\indfun(u)$.
\end{lemma}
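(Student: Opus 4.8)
The plan is to construct the section $f : \prd{u:\Z}\indfun(u)$ by integer induction: I will exhibit a displayed integer algebra structure on the family $\indfun$ over $\Z$ and then invoke elimination for $\Z$ (\cref{integer-up-vs-elim}, using that $\Z$ is an integer type, hence initial). Concretely, besides the family $\indfun(u)$ — which is already defined — I need a base element of $\indfun(Z)$ and, for each $u : \Z$, an \emph{equivalence} $\indfun(u) \simeq \indfun(S(u))$; the latter must be an equivalence (not just a map) so that the total algebra over $\Z$ is again an integer algebra.

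\textbf{Base element.} Unfolding, $\pfun(Z) \equiv \prd{x:M} (x <' Z) \to A(x)$, where $x <' Z \equiv \iota_M(x) < Z$. By \cref{lt-properties:Z} of \cref{lt-properties}, this proposition is empty for every $x : M$, so each $(x <' Z) \to A(x)$ is contractible and hence so is $\pfun(Z)$. The fixpoint type of any endofunction on a contractible type is again contractible, so $\indfun(Z) \equiv \fix(\res_Z \cc \ext_Z)$ is contractible; in particular it is inhabited.

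\textbf{Successor equivalence.} Apply the rolling rule (\cref{rolling}) with $f \defeq \ext_u : \pfun(u) \to \pfun(S(u))$ and $g \defeq \res_u : \pfun(S(u)) \to \pfun(u)$, obtaining $\indfun(u) \equiv \fix(\res_u \cc \ext_u) \simeq \fix(\ext_u \cc \res_u)$. By \cref{commute} we have the equality of endofunctions $\ext_u \cc \res_u = \res_{S(u)} \cc \ext_{S(u)}$ on $\pfun(S(u))$, and transporting along it identifies $\fix(\ext_u \cc \res_u)$ with $\fix(\res_{S(u)} \cc \ext_{S(u)}) \equiv \indfun(S(u))$. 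Composing the two gives the required equivalence $\indfun(u) \simeq \indfun(S(u))$.

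With this displayed integer algebra in hand, elimination for $\Z$ yields the section $f$, whose component at $u$ is the claimed element $f(u) : \indfun(u)$. I do not anticipate a genuine obstacle: the one piece of real content is recognizing that the rolling rule together with the commutation of $\res$ and $\ext$ provides precisely an \emph{invertible} successor over $S$, and the only care required is to note that emptiness of $x <' Z$ makes $\pfun(Z)$, and hence $\indfun(Z)$, contractible.
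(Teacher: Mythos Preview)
Your proposal is correct and follows essentially the same approach as the paper: integer induction, with the base case inhabited because $x <' Z$ is empty, and the successor step given by composing the rolling rule (\cref{rolling}) with the commutation identity (\cref{commute}) to obtain an equivalence $\indfun(u)\simeq\indfun(S(u))$. The only cosmetic difference is that the paper writes the chain of equivalences starting from $\indfun(S(u))$ rather than $\indfun(u)$.
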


\begin{proof}
By integer induction. We trivially have an element of $\indfun(Z)$, since
there is no $x : M$ with $x <' Z$.
Moreover, we have
\begin{align*}
\indfun(S(u)) &\simeq \fix(\res_{S(u)} \cc \ext_{S(u)})  \\
	          &\simeq \fix(\ext_u \cc \res_u)\\
			  &\simeq \fix(\res_u \cc \ext_u)\\
			  &\simeq \indfun(u)
\end{align*}
by \cref{commute} and the rolling rule.
\end{proof}

In fact, one can strengthen the above result to the claim that $\indfun(u)$ is contractible -- so in particular the proof only uses integer induction for propositions -- but we will not need this strengthening.

\begin{proof}[Proof of \cref{M-universal-property}]
We define $g(x)$ as the evaluation of $f(S(x))$ at $x$, using \cref{lt-properties:lt-S} of \cref{lt-properties}.
(We write $x$ instead of $\iota_M(x)$ in order to declutter notation.)
The first equation follows from the fact that $f(S(x))$ is inductive.
The second equation follows from $f(S(x)) = \ext_x(f(x))$, which
we have by construction of $f$
and the fact that the rolling map
\[\fix(\res_x \cc \ext_x) \to \fix(\ext_x \cc \res_x)\]
is
\[
\ext_x : \indfun(x) \to \indfun(S(x))
\]
on first components.
\end{proof}

\subsection{Defining the natural numbers}

\begin{lemma}[\formalizationref{Naturals.agda}{Naturals.html}]
\label{rectify-nat}
Let $Y$ be a type with an element $z : Y$.
Let $X$ be freely generated by maps $\iota : Y \to X$ and $s : X \to X$.
Then we have a natural number type.
\end{lemma}
\begin{proof}
We define a self-map $r : X \to X$ by $r(\iota(y)) = \iota(z)$ and $r(s(x)) = s(r(x))$ using the universal property of $X$.
Let \[\N \defeq \sm{x : X} r(x) = x\] be the type of fixpoints of $r$.
By Lambek's lemma,%
\footnote{In the case at hand, we start from an equivalence $\Z^0 \sqcup M \simeq M$ and there is no need to invoke Lambek's lemma.}
the map $Y \sqcup X \to X$ is an equivalence, so in particular $\iota$ and $s$ are embeddings.
We have an equivalence \[e_\iota : (z = y) \simeq (r(\iota(y)) = \iota(y))\] since $\iota$ is an embedding,
and \[e_s : (r(x) = x)) \simeq (r(s(x)) = s(x))\] since $s$ is an embedding.
Thus we have an element $z_\N : \N$ given by $(z,e_\iota(\refl))$.
We also have an endomorphism $s_\N : \N \to \N$ given by $s_\N(x,p) = (s(x), e_s(p))$.

We claim that this makes $\N$ a natural number type.
Thus let $P$ be a type family over $\N$ with $z_P : P(z_\N)$ and \[s_P : \prd{n : \N} P(n) \to P(s_\N(n))\rlap{.}\]
We construct an element \[p : \prd{x : X} \prd{h : r(x) = x} P(x, h)\] using the universal property of $X$.
We first need to construct, for $y : Y$, an element \[\prd{h : r(\iota(y)) = \iota(y)} P(y, h)\rlap{,}\] or equivalently
\[\prd{h : z = y} P(y, e_\iota(h))\rlap{.}\] We can define this by path induction using $z_P$.
Then we have to construct, for $x : X$ and $f : \prd{h : r(x) = x} P(x,h)$, an element of \[\prd{h : r(s(x)) = s(x)} P(s(x),h)\rlap{.}\] or equivalently \[\prd{h : r(x) = x} P(s(x),e_s(h))\rlap{.}\]
Given $h : r(x) = x$, we simply use $s_P(x,h)(f(h))$.
It is direct to verify that this defines a section of $P$ as a displayed natural numbers algebra.
\end{proof}

\begin{proof}[Proof of \cref{naturals-from-integers}]
Apply \cref{rectify-nat} to $M$ with its universal property (\cref{M-universal-property}).
\end{proof}

\section{Second approach: indirect}%
\label{indirect}

The strategy of our second approach is more indirect and perhaps more in the spirit of Rose~\cite{rose:naturals}.
The key operation is \emph{stabilizing} natural number algebras: carving out a fragment (not generally a subtype) on which the structure map is invertible.
We achieve this by storing with each element a trace of previous elements that explains how the element is obtained by iterations of the structure map.
To construct a type of such traces for an element $x$, we need a function type valued in the carrier of the algebra, but domain varying in $x$.
If we had access to a univalent universe, we could make the choice of such a totally ordered domain type part of the data of the trace (we believe this is the essential part of Rose's construction).
Without univalence, the extra redundancy of this choice creates problems.
We eliminate this redundancy by forcing (the code for) the domain type to be uniquely determined by the values the trace function takes.
The technical complexity of the development results from encoding this \emph{very dependent type}~\cite{very-dependent-types}.
Lastly, to avoid a universe entirely, we custom-build a universe out of fixpoints of endofunctions on the integers that is closed under sufficient type formers.

\subsection{A custom universe}

We say that a family of types $B(x)$ for $x : A$ has \emph{binary coproducts} if, for $a_0, a_1 : A$, we have $t : A$ and maps $B(a_0) \to B(t)$ and $B(a_1) \to B(t)$ exhibiting $B(t)$ as a binary coproduct (equivalently, we have $t : A$ with an equivalence $B(t) \simeq B(a_0) \sqcup B(a_1)$).
We similarly define when the family has \emph{empty types (nullary coproducts)} and \emph{unit types}.
The intuition is that we see $A$ as a universe and $B$ as the associated universal family, sending a code in the universe to an actual type.
Note that we do not assume that the family is univalent.

\begin{lemma}%
\label{family-closed-under-finite-coproducts}
There is a family $\El$ over a type $U$ that has unit types and finite coproducts.
\end{lemma}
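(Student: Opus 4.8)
The plan is to build the custom universe $U$ out of fixpoints of endofunctions on $\Z$, mimicking the way the natural numbers themselves are carved out as fixpoints in \cref{rectify-nat}, and then read off the type-former closure properties from the structure of the integers. Concretely, an element of $U$ should be (the data determining) an endofunction $t : \Z \to \Z$ — or rather an endo-algebra structure making certain diagrams commute — together with enough bookkeeping that $\El$ can be defined as $\fix(t) = \sm{x : \Z} t(x) = x$. The point of using fixpoints of integer endofunctions, rather than subsingletons or arbitrary subtypes, is that we have integer induction available to construct the relevant $t$'s and we have no confusion/descent for finite coproducts to verify the universal properties.

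\textbf{Key steps.} First I would fix the ambient data: using initiality of $\Z$ and \cref{splitting-equivalence}, set up a stock of named endofunctions on $\Z$ and on $\Z \sqcup \Z \simeq \Z$ whose fixpoint types are, respectively, $\one$ (the identity map, whose fixpoint type is $\sm{x:\Z} x = x$ — not contractible! so this needs care) and $\zero$. Here I expect to instead take the \emph{constant} map at $Z$ for the unit code, since $\fix(\const_Z) = \sm{x:\Z}(Z = x)$ is a singleton, hence contractible; and some fixpoint-free endofunction (e.g. $S$ itself, using \cref{lt-properties:lt-S} of \cref{lt-properties} to see $x < S(x)$, so $x \neq S(x)$) for the empty code, since $\fix(S) \simeq \zero$. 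Second, I would define the binary-coproduct operation on codes: given codes $t_0, t_1$ with $\fix(t_i) \simeq B(a_i)$, I want a single code $t$ on $\Z$ with $\fix(t) \simeq \fix(t_0) \sqcup \fix(t_1)$. The natural move is to transport $t_0 \sqcup t_1$ (the induced endofunction of $\fix(t_0) \sqcup \fix(t_1)$ living inside $\Z \sqcup \Z$) across a chosen equivalence $\Z \sqcup \Z \simeq \Z$ coming from \cref{splitting-equivalence} — but since $\fix$ is only functorial up to the rolling rule (\cref{rolling}), I would phrase the transport so that $\fix$ of the transported map is equivalent to $\fix(t_0) \sqcup \fix(t_1)$ via the evident equivalences. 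Finally, define $U$ as the type of all such codes (a dependent sum over, say, $\Z \to \Z$ together with whatever commutativity witnesses are needed to make the operations land back in $U$), $\El$ as $\fix$ of the underlying endofunction, and check the three closure clauses of the hypothesis; finite (i.e. $k$-ary) coproducts follow from binary plus nullary by the same bookkeeping as in the ``finite coproducts'' paragraph of the setting section.

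\textbf{The main obstacle} will be getting the \emph{dependent} interface right: ``$U$ has binary coproducts'' requires, for \emph{every} pair $a_0, a_1 : U$, a \emph{chosen} $t : U$ together with maps exhibiting $\El(t)$ as the coproduct — so I need the coproduct operation on codes to be a genuine function $U \times U \to U$, not merely an existence statement, and I need the exhibiting maps to be coherent enough that later uses (in the very-dependent-type encoding of traces) go through. This is delicate precisely because $\fix$ is not strictly functorial: composing coproduct codes will produce nested applications of the rolling rule, and I must make sure the resulting equivalences $\El(t) \simeq \El(a_0) \sqcup \El(a_1)$ are the ones induced by the standard coproduct structure, not some twisted variant. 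I would manage this by always carrying the equivalence $\El(t) \simeq (\cdots)$ as part of the returned data rather than reconstructing it, so that the universal property is witnessed by definitional unfolding plus one application of the fact (from the setting section) that a map with a two-sided inverse has contractible fibers. The integer induction needed to produce the base endofunctions is routine, exactly as in \cref{lt-properties}; the real work is the algebra of fixpoints and coproducts, for which \cref{rolling} and its explicit description of the forward map are the essential tools.
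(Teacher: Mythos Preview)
Your core idea matches the paper exactly: take $U \defeq (\Z \to \Z)$ and $\El(t) \defeq \fix(t)$, code the unit by the constant map at $Z$, and obtain binary coproducts by conjugating $t_0 \sqcup t_1 : \Z \sqcup \Z \to \Z \sqcup \Z$ through the equivalence $\Z \simeq \Z \sqcup \Z$ of \cref{splitting-equivalence}, using no-confusion to identify the fixpoints of $t_0 \sqcup t_1$ with $\fix(t_0) \sqcup \fix(t_1)$. However, you are substantially overcomplicating the packaging. There is no need for ``endo-algebra structure'', ``commutativity witnesses'', or defining $U$ as a restricted subtype: the raw function type $\Z \to \Z$ already works, and the coproduct operation is simply a function $U \times U \to U$ given by conjugation. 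Your ``main obstacle'' is illusory: the definition of ``has binary coproducts'' only asks, for each pair of codes, for \emph{some} code together with \emph{some} equivalence $\El(t) \simeq \El(a_0) \sqcup \El(a_1)$; no coherence between different instances is required, and conjugation by an equivalence preserves fixpoints on the nose (one trivial application of \cref{rolling}, not nested ones). The later use in \cref{counting-algebra} needs nothing beyond this.

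For the empty type, your choice of $S$ works but is heavier than necessary: to show $\fix(S)$ is empty you must establish irreflexivity of $<$ on top of \cref{lt-properties}, and that whole order-theoretic apparatus belongs to the \emph{first} approach in \cref{direct}, whereas the present lemma is part of the independent second approach. The paper instead codes $\zero$ by the \emph{swap} automorphism of $\Z \sqcup \Z$ (transported to $\Z$); a fixpoint would give $\tau_0(x) = \tau_1(x)$, which is impossible by disjointness of coprojections, so emptiness follows immediately from no-confusion without any ordering.
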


\begin{proof}
We take $U \defeq \Z \to \Z$ and $\El$ as taking fixpoints:
\[
\El(f) \defeq \sm{x:\Z} f(x) =_\Z x
\rlap{.}\]
The unit type is coded by the function constant on $Z$.
For finite coproducts, we make use of the equivalence $\Z \simeq \Z \sqcup \Z$ provided by \cref{splitting-equivalence}.
Under this equivalence, it suffices to exhibit codes as fixpoints of endofunctions on $\Z \sqcup \Z$.
Calculating the fixpoints of these endofunctions then makes use of the no-confusion property for binary coproducts.
\begin{itemize}
\item
The empty type is coded by the endofunction on $\Z \sqcup \Z$ swapping the two components.
The type of fixpoints of this endofunction is empty.
\item
The binary coproduct of $f, g : U$ is coded by the endofunction $f \sqcup g$ on $\Z \sqcup \Z$ that is separately $f$ on the left component and $g$ on the right component.
Its type of fixpoints is equivalent to the coproduct of the types of fixpoints of $f$ and $g$.
\qedhere
\end{itemize}
\end{proof}

\subsection{Counting structures}

\begin{definition}
The type of \emph{successor structures} from a type $C$ to a type $D$ is the record type (iterated dependent sum) with the following data:
\begin{itemize}
\item $\min : D$ and $\upp : C \to D$, together freely generating,
\item $\low : C \to D$ and $\max : D$, together freely generating,
\end{itemize}
\end{definition}

Note that free generation is expressed by a propositional type.
Equivalently, the maps
\begin{align*}
\bracks{\min, \upp} \co \one \sqcup C \to D
\rlap{,}\\
\bracks{\low, \max} \co C \sqcup \one \to D
\end{align*}
are invertible.
The induced equivalence
\[
\one \sqcup C \simeq D \simeq C \sqcup \one
\]
can be thought of as a successor relation on $C$ that misses a unique predecessor and successor from being an equivalence (as for example for a finite prefix of the natural numbers).
We could strengthen the definition to work with decidable total orders and require that the above equivalences lifts to
\[
\one \join C \simeq D \simeq C \join \one
\]
where $A \join B$ denotes the \emph{join} of orders $A$ and $B$.
However, we do not need this in our development.

\begin{definition}%
\label{counting-structure}
A \emph{counting structure} on a natural number algebra $A$ is a family $C$ over $A$ with $\neg C(z)$ and a successor structure $(\min_x, \upp_x, \low_x, \max_x)$ from $C(x)$ to $C(s(x))$ for $x : A$ such that $\min_{s(x)} \neq \max_{s(x)}$ in $C(s(s(x)))$ for $x : A$.
A \emph{counting algebra} is a natural number algebra equipped with a counting structure.
\end{definition}

The notion of counting structures is motivated by our desire to associate to each natural number a set of that cardinality.

\begin{lemma}%
\label{counting-algebra}
There is a counting algebra.
\end{lemma}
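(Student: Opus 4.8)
The plan is to construct the natural number algebra $A$ and its counting family together, using the custom universe $(U, \El)$ of \cref{family-closed-under-finite-coproducts} as a surrogate for ``the types we care about'' (lacking a universe, $A$ cannot quantify over types directly). I would let an element of $A$ be a code $c : U$ equipped with decidable equality on $\El(c)$ together with an element of $(\El(c) \simeq \zero) \sqcup \El(c)$; so informally $\El(c)$ is a set that is either empty or pointed. The counting family is $C(a) \defeq \El(\fst(a))$, taking values — as will matter later in the development — in our custom universe.

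The algebra structure is then essentially forced. The zero $z$ is the empty code (available since $U$ has finite coproducts, in particular the nullary one) with its vacuous ``empty'' witness, so $C(z)$ is empty and $\neg C(z)$ holds. The successor $s$ sends $(c, -)$ to $(\one \sqcup c, -)$, where closure of $U$ under unit types and binary coproducts supplies a code $\one \sqcup c$ with $C(s(c,-)) \simeq \one \sqcup C(c,-)$, the no-confusion property yields decidable equality on $\one \sqcup \El(c)$, and the pointedness witness is the old point reinjected along the right coprojection, or the new element if $\El(c)$ was empty (a case split on the old witness). One checks routinely that this is well defined.

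For the counting structure, fix $(c, -) : A$ and set $D \defeq C(c,-) = \El(c)$, so $C(s(c,-)) \simeq \one \sqcup D$ with the left summand holding the element freshly added by $s$. Take $[\low, \max] \co D \sqcup \one \simeq \one \sqcup D$ to be the swap equivalence, so $\max$ is that new element. For $[\min, \upp] \co \one \sqcup D \simeq \one \sqcup D$: if $D$ is empty both sides are $\one$ and take the identity; if $D$ is pointed at $m$, set $\min \defeq \tau_1(m)$ and let $\upp$ be the composite $D \simeq \one \sqcup (D \setminus \{m\}) \simeq (\one \sqcup D) \setminus \{\tau_1(m)\}$, the first equivalence using decidability of $x = m$ on $D$ (here $D \setminus \{m\}$ is the subtype of elements distinct from $m$). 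Finally, the condition $\min_{s(x)} \neq \max_{s(x)}$ in $C(s(s(x)))$ is required only for $x$ of the form $s(w)$; there $C(s(x)) \simeq \one \sqcup C(x)$ is inhabited, so the pointedness witness of $s(x)$ is a genuine point, and in $C(s(s(x))) \simeq \one \sqcup \one \sqcup (\cdots)$ the element $\max_{s(x)}$ sits in the outer left summand whereas $\min_{s(x)}$ sits in the right summand; hence they differ by disjointness of the coprojections (again no-confusion). This is exactly where the pointedness data in the definition of $A$ earns its keep.

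I do not expect a single hard step here: once $A$ is set up as above, every manipulation — forming $\one \sqcup D$, deleting a point, propagating decidable equality, invoking disjointness — is routine, and the one design constraint to respect is the $\min \neq \max$ condition, which is what forces us to carry the ``pointed or empty'' structure rather than bare codes. The real weight of the argument is in having \cref{family-closed-under-finite-coproducts} available, so that all the types the counting structure mentions are produced as $\El(c)$ for honest codes and all the coproduct bookkeeping happens inside the universe; and a convenient simplification is that, although we do not (and predicatively cannot readily) know $\Z$ to be $0$-truncated, every $\El(c)$ appearing here \emph{is} a set, being equipped with decidable equality, so no higher coherences intrude.
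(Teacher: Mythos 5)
Your construction is correct, but it is genuinely different from the paper's. The paper takes the carrier of the counting algebra to be the type of triples $(c, d, s)$ of two codes $c, d : U$ together with a successor structure $s$ from $\El(c)$ to $\El(d)$; the counting family at $(c,d,s)$ is $\El(c)$, so the successor structure required at each element is literally read off from that element, and the structure at the successor $(d, e, t)$ is manufactured from $s$ by pasting with the reassociation equivalence $\one \sqcup (\El(c) \sqcup \one) \simeq (\one \sqcup \El(c)) \sqcup \one$, with $\low' = \tau_0$ and $\max' = \tau_1(-)$; the condition $\min' \neq \max'$ then falls out of disjointness because $\min'$ lands in the $\tau_0$ summand of $\El(d) \sqcup \one$. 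This costs nothing beyond \cref{family-closed-under-finite-coproducts} and no-confusion. You instead carry a decidable-equality structure and an empty-or-pointed witness on $\El(c)$ and rebuild the successor structure at each element from scratch, taking $[\min,\upp]$ to be the transposition exchanging the fresh element with the basepoint; this works, but it obliges you to verify extra closure properties (that decidable equality and pointedness propagate through $\one \sqcup -$, and that point-deletion yields an equivalence), all of which do go through in the paper's setting via no-confusion and Hedberg-style reasoning. One phrasing quibble: the $\min_{s(x)} \neq \max_{s(x)}$ condition is required for every $x : A$ (it is the successor structure \emph{attached to} $s(x)$ that is constrained), not only for $x$ in the image of $s$; your argument covers this anyway, since every element output by your successor function carries a genuine point, so the pointed branch of your case split is always the one taken at $s(x)$. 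In short: both proofs hinge on \cref{family-closed-under-finite-coproducts} in the same way, but the paper's trick of storing the successor structure itself in the carrier lets it dispense with decidability and pointedness entirely, whereas your version is more concrete at the price of more bookkeeping.
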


\begin{proof}
From \cref{family-closed-under-finite-coproducts}, we have a family $\El$ over a type $U$ with unit types and finite coproducts.
For our counting algebra, we take as carrier the type of tuples $(c, d, s)$ of $c, d : U$ with a successor structure $s = (\min, \upp, \low, \max)$ from $\El(c)$ to $\El(d)$.
The zero element has $c$ given by the code for the empty type and $d$ given by the code for the unit type, with essentially unique successor structure.
The successor of $(c, d, s)$ is given by $(d, e, t)$ where $e$ is a code for $\El(d) \sqcup \one$ and $t = (\min', \upp', \low', \max')$ has $\low' = \tau_0$ and $\max' = \tau_1(-)$ while $[\min', \upp']$ fills the below square of equivalences:
\[
\xymatrix{
  \one \sqcup (\El(c) \sqcup \one)
  \ar[r]^{\simeq}
  \ar[d]_{\one \sqcup \bracks{\low, \max}}^{\simeq}
&
  (\one \sqcup \El(c)) \sqcup \one
  \ar[d]^{\bracks{\min, \upp} \sqcup \one}_{\simeq}
\\
  \one \sqcup \El(d)
  \ar@{.>}[r]^-{\bracks{\min', \upp'}}
&
  \El(d) \sqcup \one
\rlap{.}}
\]
Note that $\min' \neq \max'$ by effectivity of binary coproducts.
\end{proof}

\begin{corollary}%
\label{map-from-counting-algebra}
Every natural number algebra receives a map from a counting algebra.
\end{corollary}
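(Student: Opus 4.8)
The plan is to obtain the required counting algebra as a product with the specific counting algebra already constructed. Fix a natural number algebra $A$ with structure $(z_A, s_A)$, and let $T$ be the counting algebra provided by \cref{counting-algebra}, with structure $(z_T, s_T)$ and counting structure a family $C$ over $T$ as in \cref{counting-structure}.

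First I would form the product natural number algebra $T \times A$: its carrier is the product type $T \times A$, its zero is $(z_T, z_A)$, and its successor sends $(x, a)$ to $(s_T(x), s_A(a))$. The two projections are manifestly natural number algebra morphisms $T \times A \to T$ and $T \times A \to A$; the map into $A$ we are after will be the second one, once $T \times A$ has been upgraded to a counting algebra.

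Next I would transport the counting structure along the first projection, setting $C'(x, a) \defeq C(x)$. Then $C'$ at the zero is $C(z_T)$, which is empty, so $\neg C'(z_T, z_A)$. The successor structure from $C(x)$ to $C(s_T(x))$ supplied by the counting structure on $T$ is, verbatim, a successor structure from $C'(x, a)$ to $C'(s_T(x), s_A(a))$, since both of these types are by definition the corresponding fibres of $C$. And the inequality $\min_{s_T(x)} \neq \max_{s_T(x)}$ in $C(s_T(s_T(x)))$ given by the last clause of \cref{counting-structure} for $T$ is exactly the inequality required for $T \times A$, taking place in $C(s_T(s_T(x))) = C'(s_T(s_T(x)), s_A(s_A(a)))$. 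Hence $(T \times A, C')$ is a counting algebra, and the second projection $T \times A \to A$ is the desired morphism from a counting algebra to $A$.

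I do not expect a genuine obstacle here: every piece of data in a counting structure lives over the counting-algebra coordinate, and every axiom is a proposition mentioning only that coordinate, so the structure transports along the first projection without any modification. The only work is the routine bookkeeping of defining the product natural number algebra and checking that its projections are algebra morphisms.
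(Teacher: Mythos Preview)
Your proposal is correct and matches the paper's proof exactly: the paper also takes the product of the given algebra with the counting algebra of \cref{counting-algebra} and notes that counting structures are contravariant in the natural number algebra, which is precisely your pullback of $C$ along the first projection. You have simply spelled out what that contravariance means.
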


\begin{proof}
Take the product of the given natural number algebra with the counting algebra of \cref{counting-algebra}.
Since counting structures are contravariant in the natural number algebra, this product inherits a counting structure.
\end{proof}

\subsection{Stabilization}

Our only use of counting structures is to stabilize natural number algebras in \cref{map-from-stable} below.
The following statement encapsulates this use.
It provides a mechanism for annotating an algebra element with a ``recursive'' vector of algebra elements.

\begin{lemma}%
\label{cocounting-structure}
Every counting algebra $A$ admits the following:
\begin{itemize}
\item
a type $M(x)$ for $x : A$ with:
\begin{itemize}
\item
$M(z)$ is contractible,
\item
an equivalence $\pair : M(x) \times A \to M(s(x))$,
\end{itemize}
\item
$\last_x(m) : \one \sqcup A$ for $m : M(x)$ with identifications:
\begin{itemize}
\item
$\last_z(-) = \tau_0(\unitel)$,
\item
$\last_{s(x)}(\pair(m, y)) = \tau_1(y)$ for $m : M(x)$ and $y : A$,
\end{itemize}
\item
$\rest_x(m) : M(x)$ for $m : M(x)$ with identification
\[
\rest_{s(x)}(\pair(m, y)) =_{M(s(x))} \pair(\rest_x(m), \nextop_x(m))
\]
where $\nextop_x(m) \defeq [z, s](\last_x(m)) : A$.
\end{itemize}
\end{lemma}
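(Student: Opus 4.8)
The plan is to set $M(x) \defeq C(x) \to A$, the type of $A$-valued functions on the counting set $C(x)$; an element is, morally, a trace of algebra elements of ``length $x$''. Contractibility of $M(z)$ is immediate: from $\neg C(z)$ one gets $C(z) \simeq \zero$, hence $M(z) \simeq (\zero \to A) \simeq \one$.

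The crux is to single out, for each $x$, a canonical ``last slot'' $t_x : C(s(x))$. I would read it off from the successor structure one level up: since $[\min_{s(x)}, \upp_{s(x)}] : \one \sqcup C(s(x)) \to C(s(s(x)))$ is an equivalence and, by the standing hypothesis, $\max_{s(x)} \neq \min_{s(x)}$, no-confusion for coproducts forces $[\min_{s(x)}, \upp_{s(x)}]^{-1}(\max_{s(x)})$ into the right summand, say it equals $\tau_1(t_x)$. As $\upp_{s(x)}$ is an embedding and $\max_{s(x)}$ is an isolated point of $C(s(s(x)))$, the point $t_x$ is isolated in $C(s(x))$, so $C(s(x)) \simeq \one \sqcup (C(s(x)) \setminus \{t_x\})$ with the $\one$-summand hitting $t_x$. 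Combining with the equivalence $C(s(x)) \simeq \one \sqcup C(x)$ coming from the successor structure at $x$, and using that deleting an isolated point from a coproduct $\one \sqcup C$ recovers $C$, we get $C(s(x)) \setminus \{t_x\} \simeq C(x)$, hence an equivalence $e_x : C(s(x)) \simeq \one \sqcup C(x)$ carrying the $\one$-summand to $t_x$. Currying and precomposing with $e_x$ then produce the required equivalence $\pair : M(x) \times A \to M(s(x))$, built so that $\pair(m, y)$ takes the value $y$ at $t_x$ and agrees with $m$ elsewhere.

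For $\last$ I would define $\last_x(g) : \one \sqcup A$ by case distinction on $[\min_x, \upp_x]^{-1}(\max_x) : \one \sqcup C(x)$, returning $\tau_0(\unitel)$ on the left summand and $\tau_1(g(c))$ on $\tau_1(c)$. At the element $z$ the domain $C(z)$ is empty, so no-confusion forces the left summand and $\last_z(-) = \tau_0(\unitel)$. At $s(x)$, the element $[\min_{s(x)}, \upp_{s(x)}]^{-1}(\max_{s(x)})$ is by construction $\tau_1(t_x)$, so $\last_{s(x)}(g) = \tau_1(g(t_x))$; since $\pair(m,y)(t_x) = y$ this yields $\last_{s(x)}(\pair(m,y)) = \tau_1(y)$. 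For $\rest$ I would take the predecessor map $p_x \defeq [\min_x,\upp_x]^{-1} \cc [\low_x,\max_x] \cc \tau_0 : C(x) \to \one \sqcup C(x)$ from the successor structure at $x$ and set $\rest_x(g) \defeq [z,s] \cc (\id_\one \sqcup g) \cc p_x$, so that $\rest_x(g)(c)$ is $z$ when $c$ has no predecessor and $s(g(c'))$ when $c'$ is its predecessor. The displayed $\rest$-equation is then checked by splitting $C(s(x))$ into $t_x$ and its complement (equivalent to $C(x)$) and unfolding the definitions of $\pair$, $\last$, $\nextop$, $p_x$ and $p_{s(x)}$ in each case.

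The main obstacle is precisely this last verification — and, underneath it, arranging the various choices (the splitting $C(s(x)) \setminus \{t_x\} \simeq C(x)$, hence $\pair$, in relation to $\last$ and $\rest$) so that all three displayed equations hold simultaneously. One may not invoke coincidences that hold only for ``standard'' counting algebras, such as $t_x$ agreeing with $\max_x$, or $p_{s(x)}$ restricting to $p_x$ along the embedding $C(x) \hookrightarrow C(s(x))$; it is exactly the disjointness hypothesis $\min_{s(x)} \neq \max_{s(x)}$ that furnishes the canonical point $t_x$ in place of such coincidences and makes the bookkeeping go through. This coherence bookkeeping, rather than any single hard idea, is where the technical weight of the lemma lies.
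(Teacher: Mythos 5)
Your choice of $M(x) \defeq C(x) \to A$ and the argument for contractibility of $M(z)$ coincide with the paper's. Where you diverge is in the remaining data, and the paper is considerably more direct: $\pair$ is simply induced by the given decomposition $[\low_x, \max_x] : C(x) \sqcup \one \simeq C(s(x))$ (the new value is stored at $\max_x$), $\last_x(m)$ is defined by case analysis on $[\min_x, \upp_x]^{-1}(\max_x)$, and $\rest_x(m)(c)$ by case analysis on $[\min_x, \upp_x]^{-1}(\low_x(c))$ --- there is no detour through a point $t_x$ extracted from the successor structure one level up, and no deletion of isolated points. (Your instinct to apply $[z,s]$ to the looked-up value in $\rest$ is sound, since the required identification against $\nextop_x(m) \defeq [z,s](\last_x(m))$ forces an $s$ there; and your lemma that removing a detachable point from $\one \sqcup C$ recovers $C$ is in fact provable by a case split, so that step is not where the trouble lies.)

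The genuine gap is that the decisive verification --- the $\rest$ identification --- is not carried out, and under the ground rules you impose on yourself it cannot be. Evaluate both sides of $\rest_{s(x)}(\pair(m,y)) = \pair(\rest_x(m), \nextop_x(m))$ at a slot $\phi_x^{-1}(c)$ in the complement of $t_x$: unfolding your definitions, the identity reduces exactly to $p_{s(x)}(\phi_x^{-1}(c)) = (\id_\one \sqcup \phi_x^{-1})(p_x(c))$, i.e.\ to the statement that the predecessor map at level $s(x)$ restricts to the one at level $x$ along your chosen embedding --- which is precisely one of the ``coincidences'' you declare off-limits. Evaluating at the slot $t_x$ likewise requires knowing how $\low_{s(x)}(t_x)$ decomposes under $[\min_{s(x)}, \upp_{s(x)}]$, about which the hypothesis $\min_{s(x)} \neq \max_{s(x)}$ says nothing. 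So the concluding claim that the disjointness hypothesis alone ``makes the bookkeeping go through'' is not substantiated and, for your definitions, false. What the argument actually consumes is the compatibility of consecutive successor structures --- $\low_{s(x)}(\min_x) = \min_{s(x)}$, $\low_{s(x)} \cc \upp_x = \upp_{s(x)} \cc \low_x$, $\upp_{s(x)}(\max_x) = \max_{s(x)}$ --- which is exactly what the commuting square in the construction of \cref{counting-algebra} provides; under it $t_x = \max_x$ and your entire $t_x$ apparatus collapses to the paper's direct definitions. Without some such compatibility, your $\phi_x$ (itself defined by an ad hoc case split on whether $t_x$ is $\min_x$ or lies in the image of $\upp_x$) bears no relation to $p_x$ and $p_{s(x)}$, and the third identification fails.
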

%\fxnote{Perhaps the use the projections $M(s(x)) \to A$ and $M(x) \to A$ with universal property of product instead of $\pair$.}

\begin{proof}
Let $C$ denote the counting structure of $A$.
We define
\[
M(x) \defeq C(x) \to A
\rlap{.}\]
Note that $M(z)$ is contractible since $\neg C(z)$.
The equivalence
\[
\pair : (C(x) \to A) \times A \simeq (C(s(x)) \to A)
\]
is induced by freeness of $\low : C(x) \to C(s(x))$ and $\max : C(s(x))$.

For the other data, we use that $C(s(x))$ is freely generated by $\min$ and $\upp$.
The element $\last_x(m) : \one \sqcup A$ is defined by case distinction on $\max : C(s(x))$:
\begin{itemize}
\item
on $\min$, we return $\tau_0(\unitel)$,
\item
on $\upp(y)$ with $y : C(x)$, we return $\tau_1(m(y))$.
\end{itemize}

The element $\rest_x(m)(c) : A$ for $m : C(x) \to A$ and $c : C(x)$ is defined by case distinction on $\low(c) : C(s(x))$:
\begin{itemize}
\item
on $\min$, we return $z$,
\item
on $\upp(y)$ with $y : C(x)$, we return $m(y)$.
\end{itemize}

All the required identifications are direct.
\end{proof}

We now use the vectors provided by the previous lemma to annotate an algebra element with a trace witnessing that it is obtained recursively from the structure map.
Restricting to elements with such a trace stabilizes the natural number algebra.

\begin{lemma}%
\label{map-from-stable}
Every natural number algebra receives a map from a stable natural number algebra (see \cref{stable}).
\end{lemma}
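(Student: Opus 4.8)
The plan is to reduce to the case of a counting algebra and then carve out a stable fragment using the recursive-vector structure of \cref{cocounting-structure}. Given a natural number algebra $N$, \cref{map-from-counting-algebra} produces a counting algebra $A$ with an algebra morphism $A \to N$, so it is enough to construct a stable natural number algebra together with an algebra morphism to $A$; precomposing $A \to N$ then yields the claim. From now on fix the family $M$ and the maps $\pair$, $\last$, $\rest$, $\nextop$ obtained from \cref{cocounting-structure} applied to $A$.

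Intuitively, an element of the stable algebra $\hat N$ should be an element $x : A$ equipped with a \emph{coherent trace}: a recursive vector $m : M(x)$ recording the chain $z, s(z), \dots$ of iterated successors leading up to $x$. The two properties one wants of such a vector are that it is a fixpoint of $\rest_x$ — which, unwinding the identities of \cref{cocounting-structure}, pins it down to the chain $z, s(z), \dots, s^{k-1}(z)$ of length $k = |C(x)|$ — and that the underlying element $x$ is reconstructed from the vector as $\nextop_x(m)$, which rules out the ``junk'' elements of $A$ not arising from the structure map (on which the fixpoint condition alone would leave $\rest_x$ with too many fixpoints). The subtle point is that, since $A$ is not known to be $0$-truncated, the second condition must not be imposed as a bare identification $\nextop_x(m) =_A x$: this would leave $\hat N$ with spurious higher structure above the zero element, destroying stability. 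Instead, one has to encode the trace as a genuine very dependent type — the technical core of this approach, drawing on the custom universe of \cref{family-closed-under-finite-coproducts} to force the code for the domain of the trace to be determined — so that the trace type over $z$ is genuinely contractible. Granting such a family $T$ over $A$, set $\hat N \defeq \sm{x : A} T(x)$ with projection $\pi : \hat N \to A$ on the first component.

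Next I would equip $\hat N$ with a natural number algebra structure making $\pi$ a morphism. Using that $M(z)$ is contractible and that $\nextop_z$ of its unique element is $z$, the trace type $T(z)$ collapses, giving a zero $z_{\hat N}$ over $z$. For the successor of $(x, m)$ over $x : A$ one takes $(s(x), \pair(m, x))$: the identity $\rest_{s(x)}(\pair(m, x)) = \pair(\rest_x(m), \nextop_x(m))$ together with the trace conditions on $m$ shows that $\pair(m, x)$ is again a coherent trace (now of $s(x)$), while $\last_{s(x)}(\pair(m, x)) = \tau_1(x)$ handles the remaining bookkeeping. That $\pi$ commutes with the structure is immediate from these definitions.

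The main obstacle is stability: that $[z_{\hat N}, s_{\hat N}] : \one \sqcup \hat N \to \hat N$ is an equivalence. I would produce an inverse by case analysis on $\last_x(m) : \one \sqcup A$ for $(x, m) : \hat N$. If $\last_x(m) = \tau_0(\unitel)$ then $x = \nextop_x(m) = z$ by the trace conditions, and contractibility of $M(z)$ identifies $(x, m)$ with $z_{\hat N}$, so we output the left injection. If $\last_x(m) = \tau_1(w)$ then $x = \nextop_x(m) = s(w)$, and transporting $m$ through $\pair^{-1}$ exhibits it as $\pair(m', w)$, where the trace conditions on $m$ force $m'$ to be a coherent trace of $w$; here we output the right injection at $(w, m')$. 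Checking that the two composites are the identity reduces, via disjointness and the embedding property of the coproduct coprojections (effectivity of finite coproducts) together with the computations of $\last$ above, to routine unfolding — provided $T$ has been set up with the right truncation behaviour, which is precisely the delicate part flagged above. Composing the resulting algebra morphism $\hat N \to A$ with $A \to N$ then completes the proof.
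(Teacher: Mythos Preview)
Your overall plan coincides with the paper's: reduce to a counting algebra, invoke \cref{cocounting-structure}, and take the stable algebra to consist of pairs $(x,m)$ with $m:M(x)$ subject to the two conditions $m = \rest_x(m)$ and $x = \nextop_x(m)$. But you then \emph{discard} exactly this definition, asserting that the bare identification $x = \nextop_x(m)$ would introduce spurious loops over $z$ and ``destroy stability'', and you replace it by an unspecified ``very dependent type'' $T$. That is where the gap lies: you never define $T$, and your reason for abandoning the concrete definition is mistaken.

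The paper takes $B$ to be literally the four-field record $(x,m,q,p)$ with $q : x = \nextop_x(m)$ and $p : m = \rest_x(m)$ and proves $B \simeq \one \sqcup B$ directly. The trick you are missing is \emph{not} to define $z_B, s_B$ first and then build an inverse by casing on $\last_x(m)$ (where, as you anticipate, one would eventually have to compare an arbitrary loop $q : z = z$ with the canonical one). Instead one introduces an auxiliary field $k : \one \sqcup A$ together with $\alpha : k = \last_x(m)$, so that $q$ becomes $x = [z,s](k)$; one then \emph{contracts $x$ along $q$}, setting $x \defeq [z,s](k)$, and only afterwards case-splits on $k$. In this order the path $q$ is consumed as a singleton rather than compared, and the $k = \tau_0(\unitel)$ branch reduces to $\tau_0(\unitel) = \tau_0(\unitel)$ (contractible since $\tau_0$ embeds) while the $k = \tau_1(y)$ branch unwinds to $B(y)$. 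The algebra structure on $B$ is then \emph{read off} from this equivalence, guaranteeing stability by construction. So the ``very dependent'' content is already fully encapsulated in $M$ via the cocounting structure; no further encoding is needed on top of the plain identity types $q$ and $p$.
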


\begin{proof}
Let $A$ denote the given natural number algebra.
Using \cref{map-from-counting-algebra}, we may reduce to the setting where $A$ comes equipped with a counting structure.
There, we have the structure given by \cref{cocounting-structure}.

We define a natural number algebra $B$ with underlying type the following record (iterated dependent sum):
\begin{itemize}
\item
$x : A$,
\item
$m : M(x)$,
\item
$q : x = \nextop_x(m)$.
\item
$p : m =_{M(x)} \rest_x(m)$,
\end{itemize}
This lies over $A$ via the first projection.
Given $x : A$, we write $B(x)$ for the record of the remaining three components.
It remains to construct an equivalence $\one \sqcup B \simeq B$ over $[z, s] : \one \sqcup A \to A$.
We construct this equivalence in reverse direction as a series of steps.

First, $B$ arises by contracting $k$ and $\alpha$ in the following record:
\begin{itemize}
\item
$x : A$,
\item
$k : \one \sqcup A$,
\item
$q : x =_A [z, s](k)$,
\item
$m : M(x)$,
\item
$\alpha : k =_{\one \sqcup A} \last_x(m)$,
\item
$p : m =_{M(x)} \rest_x(m)$.
\end{itemize}
Contracting $x$ with $q$, we obtain the equivalent record:
\begin{itemize}
\item
$k : \one \sqcup A$,
\item
$m : M(x)$,
\item
$\alpha : k = \last_x(m)$,
\item
$p : m = \rest_x(m)$
\end{itemize}
where the reverse direction sets $x \defeq [z, s](k)$ as required.
It remains to show that the record of the last three fields is equivalent to $B(k)$.
For this, we perform case distinction on $k$.

For $k = \tau_0(\unitel)$, we have $x = z$ and are equivalently left with:
\begin{itemize}
\item
$m : M(z)$,
\item
$\alpha : \tau_0(\unitel) = \last_z(m)$,
\item
$p : m = \rest_z(m)$.
\end{itemize}
Since $M(z)$ is contractible, this is equivalent to $\tau_0(\unitel) =_{\one \sqcup A} \tau_0(\unitel)$.
By no-confusion, $\tau_0$ is an embedding, so this is contractible.

For $k = \tau_1(y)$, we have $x = s(y)$ and are equivalently left with:
\begin{itemize}
\item
$m : M(s(y))$,
\item
$\alpha : \tau_1(y) = \last_{s(y)}(m)$,
\item
$p : m = \rest_{s(y)}(m)$.
\end{itemize}
Expanding the product $M(s(y)) \simeq M(y) \times A$, this is equivalent to:
\begin{itemize}
\item
$m' : M(y)$,
\item
$y' : A$,
\item
$\alpha : \tau_1(y) = \last_{s(y)}(\pair(m', y'))$,
\item
$p : \pair(m', z) = \rest_{s(y)}(\pair(m', y'))$.
\end{itemize}
This rewrites to:
\begin{itemize}
\item
$m' : M(y)$,
\item
$y' : A$,
\item
$\alpha : \tau_1(y) = \tau_1(y')$,
\item
$p : \pair(m', y') = \pair(\rest_y(m'), \nextop_y(m'))$.
\end{itemize}
Since $\tau_1$ is an embedding, we may contract $y'$ with $\alpha$:
\begin{itemize}
\item
$m' : M(y)$,
\item
$p : \pair(m', y) = \pair(\rest_y(m'), \nextop_y(m'))$.
\end{itemize}
Splitting $p$ into a pair of equalities, we recover $B(y)$.
\end{proof}

The strategy of the above proof is reminiscent of the proof of the rolling rule (\cref{rolling}).
In particular, the definition of $B$ is almost that of the fixpoints of an operation on $\sm{x:A} M(x)$.
However, the type of $p$ seems to resist this (it is an identification in $M(x)$, not $M(\nextop_x(m))$).
It is unclear to us how this analogy can be exploited.

\subsection{Defining the natural numbers}

\begin{lemma}%
\label{stable-embedding}
There is a stable natural number algebra that embeds into $\Z$.
\end{lemma}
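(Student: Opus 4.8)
The plan is to produce a stable natural number algebra concretely as a fixpoint-type sitting inside $\Z$, so that the embedding into $\Z$ is built into the construction. Combining \cref{map-from-stable} with the initial natural number algebra would be circular (we have no initial algebra yet), so instead I would start from the custom universe $(\El, U)$ of \cref{family-closed-under-finite-coproducts} and the equivalence $\Z \simeq \Z \sqcup \Z$ of \cref{splitting-equivalence}. The idea is that the stabilization machinery of \cref{cocounting-structure} and \cref{map-from-stable}, when applied to a \emph{counting} algebra whose carrier already lives in the universe $U$, produces an algebra $B$ whose carrier is again (coded by) an element of $U$ — hence a fixpoint type, hence a subtype of $\Z$.

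Concretely, first I would take the counting algebra $A$ from \cref{counting-algebra}, whose carrier is the type of tuples $(c,d,s)$ with $c,d : U$ and $s$ a successor structure between $\El(c)$ and $\El(d)$. The key observation to nail down is that this carrier is itself coded in the universe: $U = \Z \to \Z$ and $\El$ has unit types and finite coproducts, and the successor-structure condition is a proposition built from such coproducts and identity types of fixpoint types; with some care $\sm{c,d:U}\,\mathsf{SuccStr}(\El(c),\El(d))$ is equivalent to $\El(e)$ for a suitable $e : U$, i.e.\ to a fixpoint type, hence embeds into $\Z$. Then I would run \cref{cocounting-structure} to get the families $M(x)$, and \cref{map-from-stable} to get the stable algebra $B$ over this $A$; the point is that each of the four fields of $B(x)$ — an element of $M(x) = (C(x) \to A)$, and three identity types between such — is again (equivalent to) a fixpoint type over $\Z$, because $U$ is closed under the relevant operations and $\El$ reflects identity types of fixpoints into fixpoints. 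So $B$ is stable by \cref{map-from-stable}, and its carrier is coded by some $b : U$, giving an embedding $B \simeq \El(b) = \fix(b) \hookrightarrow \Z$.

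I expect the main obstacle to be verifying that the custom universe $U$ is genuinely closed under \emph{all} the type formers that appear when we unfold the carrier of $B$: not just finite coproducts and unit types (which \cref{family-closed-under-finite-coproducts} gives), but also $\Z$-indexed dependent products $C(x) \to A$ and, crucially, identity types between fixpoint types. Closure under $\Id$ of fixpoints is the delicate point: $\fix(f)$ is $\sm{x:\Z}\,f(x) = x$, and an identity $(x, p) = (x', p')$ in $\fix(f)$ is, by the characterization of identities in $\Sigma$-types and the fact that $\Z$ is being treated as a set-like object, equivalent to $x =_\Z x'$ together with a propositional condition; one must massage this into a single fixpoint of a new endofunction on $\Z$ (or on $\Z \sqcup \Z$), again using no-confusion for coproducts exactly as in the proof of \cref{family-closed-under-finite-coproducts}. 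The finite-coproduct case of \cref{counting-structure}'s successor condition ($\min' \neq \max'$) also needs to be tracked through the coding. Once these closure properties are assembled, the lemma follows by threading $B$ through \cref{map-from-stable} and reading off that its carrier is a fixpoint type; I would organize the bookkeeping so that "carrier lies in $U$" is maintained as an invariant at every step of the construction in \cref{map-from-stable}.
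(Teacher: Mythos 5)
There are two genuine problems here, one fatal to your overall strategy and one in its premise. The fatal one: your plan is to realize the carrier of the stable algebra as a fixpoint type $\El(b) \equiv \fix(b) \equiv \sm{x:\Z} b(x) =_\Z x$ and take the first projection to $\Z$ as the embedding. But that projection is an embedding only if $b(x) =_\Z x$ is a proposition for each $x$, which amounts to knowing that $\Z$ is $0$-truncated --- and at this point in the development that is precisely what we do \emph{not} know (see the footnote in the introduction and \cref{integers-set}, where $0$-truncatedness of $\Z$ is deduced \emph{from} \cref{stable-embedding}). So even granting all the coding, you would obtain a map to $\Z$, not an embedding. On top of that, the closure properties you yourself flag as ``the main obstacle'' are out of reach: \cref{family-closed-under-finite-coproducts} gives $U$ unit types and finite coproducts only, and there is no prospect of closing $U$ under the function types $C(x) \to A$ (note that the carrier $A$ of the counting algebra itself contains $U \equiv \Z \to \Z$ and is in no sense a small fixpoint type) nor under identity types of fixpoint types, which is again blocked by not knowing the h-level of $\Z$.

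The premise is also off: applying \cref{map-from-stable} to $\Z$ is not circular, because $\Z$ \emph{is} a natural number algebra (forget that $S$ is invertible), and \cref{map-from-stable} requires no initiality of its input. This is exactly what the paper does: it obtains a stable algebra $A$ with a morphism $f : A \to \Z$. The missing idea is then how to upgrade this $f$, which need not be an embedding, to an embedding. The paper's move is to feed the structure equivalence $A \simeq \one \sqcup A$ of the stable algebra into \cref{int-from-halves}, producing an integer algebra on $A \sqcup \one \sqcup A$ lying over $\Z$ and hence the decomposition $\Z \simeq \Z^- \sqcup \Z^0 \sqcup \Z^+$ of \eqref{int-decomposition}; initiality of $\Z$ exhibits $\Z^0$ as a retract of $\one$, hence contractible, and then $\Z^0 \sqcup \Z^+$ --- an honest summand of $\Z$, so automatically embedded --- carries a stable natural number algebra structure via $S^+ : \Z^0 \sqcup \Z^+ \simeq \Z^+$. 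Without some such step that carves the algebra out of $\Z$ itself rather than mapping into it, your construction does not produce an embedding.
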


\begin{proof}
By \cref{map-from-stable}, we have a stable natural number algebra $A$ with a morphism $f : A \to \Z$.
We now apply \cref{int-from-halves} with $A \defeq A$ and $B \defeq \one$.
The equivalence $A \simeq B \sqcup A$ is the structure map of the stable natural number algebra $A$.
Note that $B$ has a unique element.
The resulting integer algebra (with carrier $A \sqcup \one \sqcup A$) again lies over $\Z$: we send $\tau_0(a)$ to $S^{-1}(\inv(a))$, $\tau_1(\unitel)$ to $Z$, and $\tau_2(a)$ to $S(f(a))$.
Here, $\inv$ is the underlying map of the unique integer algebra morphism from $\Z$ to $\Z'$ where $\Z'$ has automorphism $S^{-1}$ instead of $S$.

By initiality of $\Z$, the algebra map
\[
\Z \to A \sqcup \one \sqcup A
\]
is a section of the algebra map
\[
A \sqcup \one \sqcup A \to \Z
\rlap{.}\]
By construction, the former map sends $\Z^0$ in the decomposition~\eqref{int-decomposition} to the middle component in $A \sqcup \one \sqcup A$, and in turn, that middle component is sent to $\Z^0$ by the latter map (specifically, to $Z$).
This exhibits $Z^0$ as a retract of $\one$, in particular it is contractible.
We may thus silently replace $Z^0$ by $\one$ in the obtained decomposition~\eqref{int-decomposition}.

This makes $\one \sqcup \Z^+$ with zero element $\tau_0(\unitel)$ and successor function $\tau_1 \cc S^+$ into a stable natural number algebra embedding into $\Z$.
\end{proof}

In the last step of the above lemma, we can equivalently directly use $\Z^+$ as the desired natural number algebra.
Denote the image of the zero in $\Z$ by $t$.
We then need to postcompose with the shifting map of $\Z$ that sends $t$ to $Z$ to obtain the algebra embedding from $\Z^+$ to $\Z$.

\begin{lemma}%
\label{stable-to-initial}
Let $A$ be a natural number algebra displayed over $\Z$ such that such that $A(Z)$ is contractible.
If $A$ is stable, then it is initial.
\end{lemma}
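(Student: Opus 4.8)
The plan is to reduce to elimination via \cref{nat-up-vs-elim} and then to establish elimination by the strategy of the proof of \cref{M-universal-property}: partial inductive sections indexed over $\Z$, with the passage between consecutive indices turned into an equivalence by \cref{commute} and the rolling rule \cref{rolling}. Write $A_{\mathrm{tot}} \defeq \sm{x:\Z} A(x)$ for the total algebra and $\fst : A_{\mathrm{tot}} \to \Z$ for the projection, which is a morphism of natural number algebras; abbreviate $|a| \defeq \fst(a)$. There are two new inputs compared to \cref{M-universal-property}. The case distinction used in the extension step will now come from stability: the inverse of the structure map, an equivalence $A_{\mathrm{tot}} \simeq \one \sqcup A_{\mathrm{tot}}$, presents each $a : A_{\mathrm{tot}}$ canonically as either the zero or a unique successor. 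And contractibility of $A(Z)$ will be used to identify the displayed zero with the algebra zero, which is what makes the computation rule over $z$ come out correctly.

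Fixing a natural number algebra $P$ displayed over $A_{\mathrm{tot}}$, I would first manufacture a suitable ordering of $\Z$. Feeding the stability equivalence $A_{\mathrm{tot}} \simeq \one \sqcup A_{\mathrm{tot}}$ (with $\unitel : \one$) into \cref{int-from-halves} and combining with $\fst$ exactly as in \cref{stable-embedding} yields a ternary decomposition $\Z \simeq \Z^- \sqcup \Z^0 \sqcup \Z^+$, with $S$ restricting to $S^+ : \Z^0 \sqcup \Z^+ \simeq \Z^+$ and with $\Z^0$ contractible. Re-running the construction of \cref{lt-properties} relative to \emph{this} decomposition (the subtraction operation is unchanged; only the reference to a ``negative part'' moves) gives a proposition-valued relation $<$ on $\Z$ satisfying \cref{lt-properties:S-right}, \cref{lt-properties:S} and \cref{lt-properties:lt-S}, together with the analogue of \cref{lt-properties:Z}: no element of $\Z^0 \sqcup \Z^+$ lies below $Z$. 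Combined with the claim that $\fst$ lands in $\Z^0 \sqcup \Z^+$, this would give $\neg(|a| < Z)$ for every $a : A_{\mathrm{tot}}$ -- this last point is discussed below and is where the real work is.

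With the ordering in hand I would run the argument in parallel to \cref{M-universal-property}: for $u : \Z$ set $\pfun(u) \defeq \prd{a : A_{\mathrm{tot}}} (|a| < u) \to P(a)$; define the restriction $\res_u : \pfun(S(u)) \to \pfun(u)$ using \cref{lt-properties:S-right} and the extension $\ext_u : \pfun(u) \to \pfun(S(u))$ by the stability case distinction (returning $z_P$ at the zero and $s_P(f(-))$ at a successor, the latter via \cref{lt-properties:S}); check that $\res$ and $\ext$ commute as in \cref{commute}; put $\indfun(u) \defeq \fix(\res_u \cc \ext_u)$ and derive $\prd{u:\Z} \indfun(u)$ by integer elimination, with base case $u \equiv Z$ trivial since $\pfun(Z)$ is then contractible, and step $\indfun(S(u)) \simeq \fix(\ext_u \cc \res_u) \simeq \indfun(u)$ from \cref{commute} and \cref{rolling}. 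Finally I would define the section of $P$ by sending $a$ to the value at $a$ of this element of $\indfun(S(|a|))$, using \cref{lt-properties:lt-S}; the computation rule over $s$ holds because the forward rolling map is $\ext$ on first components (just as in \cref{M-universal-property}), and the rule over $z$ holds using contractibility of $A(Z)$ to identify $z$ with the displayed zero. This gives elimination, hence initiality.

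The step I expect to be the main obstacle is the claim that $\fst$ lands in $\Z^0 \sqcup \Z^+$ -- equivalently, that $A_{\mathrm{tot}}$ has no element strictly below its zero. This is exactly the base case on which the whole scheme rests, and it is the one place where the hypotheses (and in particular the interaction with $\Z$) are genuinely used: phrased naively as an induction over $A_{\mathrm{tot}}$ it would be circular, so it must instead be read off from how $\fst$ interacts with the twisted integer-algebra structure on $A_{\mathrm{tot}} \sqcup \one \sqcup A_{\mathrm{tot}}$ and with the section $\Z \to A_{\mathrm{tot}} \sqcup \one \sqcup A_{\mathrm{tot}}$ supplied by initiality of $\Z$, so that ``non-negativity'' is certified by the \emph{integer} induction principle. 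This is the sense in which first stabilizing the algebra is what unlocks the construction.
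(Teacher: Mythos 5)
Your overall scheme---reduce to elimination, order $\Z$ via \cref{int-from-halves} applied to the stability equivalence, and rerun the partial-sections/rolling-rule machinery of \cref{M-universal-property} with the case split now supplied by stability of $A_{\mathrm{tot}}$---is coherent, but it rests entirely on the step you flag and do not prove: that $\fst : A_{\mathrm{tot}} \to \Z$ avoids $\Z^-$. This is a genuine gap, and it is not closed by the gesture towards ``certifying non-negativity by integer induction'': the evident predicate $P(x) \defeq \prd{a : A_{\mathrm{tot}}} (\fst(a) = x) \to (x \in \Z^0 \sqcup \Z^+)$ is \emph{not} reversible. The implication $P(x) \to P(S(x))$ does follow from the stability case split, but for the converse, given $\fst(a) = x$ with $a = s(a')$ you need information about $\fst(a')$, i.e.\ about $P$ at $S^{-1}(x)$ rather than at $S(x)$; and knowing $S(x) \in \Z^0 \sqcup \Z^+$ does not exclude $x \in \Z^-$, since $S$ carries $\Z^-$ into $\Z^- \sqcup \Z^0$. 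The claim is true but needs a strengthened induction hypothesis: writing $W \defeq A_{\mathrm{tot}} \sqcup \one \sqcup A_{\mathrm{tot}}$ with twisted automorphism $r$ and $g : \Z \to W$ for the section given by initiality, the map $a \mapsto r^{-1}(\tau_2(a))$ is an algebra morphism $A_{\mathrm{tot}} \to W$ landing in the last two components, and the predicate $P'(x) \defeq \prd{a}(\fst(a) = x) \to (g(x) = r^{-1}(\tau_2(a)))$ \emph{is} reversible; its base case $P'(Z)$ is exactly where contractibility of $A(Z)$ enters, forcing the fiber of $\fst$ over $Z$ to consist only of $z$. This also relocates the use of that hypothesis: it is not bookkeeping for the computation rule over $z$ but the load-bearing input---without it, $\N \sqcup \Z$ displayed over $\Z$ by inclusion and identity is a stable, non-initial counterexample.

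Even once repaired, your route is substantially heavier than the paper's, which uses no ordering, no partial sections, and no rolling rule here. The paper observes that having a section is covariant in displayed algebras over $A$, applies \cref{map-from-stable} a \emph{second} time to replace the displayed algebra $E$ by a stable one, and then runs a single propositional integer induction on $Q(x) \defeq \prd{a : A(x)} \isContr(E(x,a))$, which is reversible precisely because both $A$ and $E$ are stable. The content you are trying to extract from an ordering is obtained there directly from double stabilization.
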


\begin{proof}
Using \cref{integer-up-vs-elim}, it suffices to show that $A$ has elimination.
In natural number algebras over $A$, having a section is a covariant structure.
That is, given a morphism $E' \to E$ of natural number algebras displayed over $A$, if $E'$ has a section, then so does $E$.
Using \cref{map-from-stable}, it thus suffices to construct a section for a \emph{stable} natural number algebra $E$ displayed over $A$.
In fact, we will show that $E$ is fiberwise contractible over $A$.
Since both $A$ and $E$ are stable, we have $E(Z, z) \simeq 1$ and $E(x, a) \simeq E(S(x), s(a))$ for $x : \Z$ and $a : A(x)$.

We define an integer algebra $Q$ displayed over $\Z$ with underlying family of propositions
\[
Q(x) \defeq \prd{a : A(x)} \isContr(E(x, a))
\rlap{.}\]
Note that $Q(Z)$ holds because $z : 1 \to A(Z)$ is an equivalence and $E(Z, z)$ is contractible.
We will check in the next paragraph that $Q(x)$ and $Q(S(x))$ are logically equivalent for $x : \Z$.
With that, initiality of $\Z$ gives a section $q$ of $Q$.
Then we have $q(x, a) : \isContr(E(x, a))$ for $a : A(x)$ as desired.

From $s : A(x) \to A(S(x))$ and $E(x, a) \simeq E(S(x), s(a))$, we have that $Q(S(x))$ implies $Q(x)$.
We check that $Q(x)$ implies $Q(S(x))$.
Given
\[
f : \prd{a : A(x)} \isContr(E(x, a))
\]
and $a' : A(S(x))$, we need to show that $E(S(x), a')$ is contractible.
We argue by cases using stability of $A$.
\begin{itemize}
\item
For $(S(x), a') = (Z, z)$, recall $E(Z, z)$ is contractible.
\item
For $(S(x), a') = (S(y), s(a))$ with $y : \Z$ and $a : A(y)$, note that $S(x) = S(y)$, hence $x = y$.
From $f$, we get that $E(y, a)$ contractible, hence so is $E(S(y), s(a))$.
\qedhere
\end{itemize}
\end{proof}

\begin{proof}[Proof of \cref{naturals-from-integers}]
By \cref{stable-embedding}, we have an embedding $X \hookrightarrow \Z$ of natural number algebras with $X$ stable.
Taking fibers, we obtain a stable natural number algebra $A$ displayed over $X$ with propositional fibers.
With $z : A(Z)$, this makes $A(Z)$ contractible.
We apply \cref{stable-to-initial} to $A$.
\end{proof}

\begin{remark}%
\label{integers-set}
\Cref{stable-embedding} implies that the point $Z : \Z$ is detachable.
In particular, the loop space at $Z$ is contractible.
Since $\Z$ is homogeneous, all its loop spaces are contractible.
This shows that $\Z$ is 0-truncated without first proving that the natural numbers are 0-truncated.
With \cref{stable-to-initial}, it implies that the natural numbers are 0-truncated.
\end{remark}

\bibliographystyle{ACM-Reference-Format}
\bibliography{naturals}

\end{document}